\documentclass[twoside,11pt]{article}

\usepackage{jmlr2e}

\usepackage[english]{babel}
\usepackage{amsmath}
\usepackage{amstext}
\usepackage{amssymb}
\usepackage{mathtools}
\usepackage{algorithm}
\usepackage{algorithmic}
\usepackage{color}
\usepackage[dvipsnames]{xcolor}
\usepackage[normalem]{ulem}

\usepackage{multicol}
\usepackage{lipsum}
\setlength{\columnseprule}{0.4pt}

\usepackage{tcolorbox}
\tcbuselibrary{skins,breakable}
\usetikzlibrary{shadings,shadows}

\newcommand{\R}{\mathbb{R}}

\newcommand{\supp}{\mathop{\rm supp}}
\newcommand{\av}{\mathop{\rm av}}
\newcommand{\Cov}{\mathop{\rm Cov}}
\newcommand{\Var}{\mathop{\rm Var}}
\newcommand{\Hes}{\mathop{\rm Hes}}
\newcommand{\cspan}{\mathop{\rm span}}
\newcommand{\softmax}{\mathop{\rm softmax}}
\newcommand{\spath}{\mathop{\rm path}}
\newcommand{\opt}{\mathop{\rm opt}}
\newcommand{\Law}{\mathop{\rm Law}}

\newcommand{\Unif}{\mathop{\rm U}}

\newcommand{\T}{\mathop{\rm T}}
\newcommand{\ee}{\mathop{\rm e}}

\newcommand{\Id}{\mathop{\rm Id}}
\newcommand{\intr}{\mathop{\rm int}}

\newcommand{\argmin}{\mathop{\rm arg\, min}}

\newcommand{\bd}{{\boldsymbol d}}

\newcommand{\bepsilon}{{\boldsymbol \epsilon}}

\newcommand{\bxi}{{\boldsymbol \xi}}

\newcommand{\blambda}{{\boldsymbol \lambda}}
\newcommand{\btheta}{{\boldsymbol \theta}}

\newcommand{\ba}{{\bf a}}
\newcommand{\bx}{{\bf x}}

\newcommand{\inv}{{\mathsf{inv}}}

\newcommand{\sPr}{{\mathsf{Pr}}}

\newcommand{\sX}{{\mathsf X}}

\newcommand{\sA}{{\mathsf A}}

\newcommand{\sZ}{{\mathsf Z}}

\newcommand{\sE}{{\mathsf E}}

\newcommand{\D}{{\mathcal D}}

\newcommand{\F}{{\mathcal F}}

\newcommand{\C}{{\mathcal C}}

\newcommand{\Pnew}{{\mathcal P}}

\newcommand{\X}{{\mathcal X}}

\newcommand{\M}{{\mathcal M}}
\newcommand{\cR}{{\mathcal R}}

\newtheorem{assumption}{Assumption}

\newcommand{\ns}[1]{\color{black}{#1}\color{black}}
\newcommand{\tk}[1]{\textcolor{black}{#1}} 

\usepackage{lastpage}
\jmlrheading{NA}{NA}{1-\pageref{LastPage}}{NA; Revised NA
	}{NA}{24-0458}{Berkay Anahtarci, Can Deha Kariksiz, and Naci Saldi}
\ShortHeadings{Maximum Causal Entropy IRL in MFGs}{Anahtarci, Kariksiz and Saldi}

\firstpageno{1}

\allowdisplaybreaks

\begin{document}

\title{\ns{Maximum Causal Entropy IRL in Mean-Field Games and GNEP Framework for Forward RL}}

\author{\name Berkay Anahtarci \email berkay.anahtarci@ozyegin.edu.tr\\
       \addr Department of Natural and Mathematical Sciences\\
       \"{O}zye\u{g}in University\\
       \.{I}stanbul, Turkey
       \AND
       \name Can Deha Kariksiz \email deha.kariksiz@ozyegin.edu.tr\\     
       \addr Department of Natural and Mathematical Sciences\\
       \"{O}zye\u{g}in University\\
       \.{I}stanbul, Turkey
       \AND
       \name Naci Saldi \email naci.saldi@bilkent.edu.tr \\
       \addr Department of Mathematics\\
       Bilkent University\\
       Ankara, Turkey
       }
       
\editor{ }

\maketitle

\begin{abstract}
\tk{This paper explores the use of Maximum Causal Entropy Inverse Reinforcement Learning (IRL) within the context of discrete-time stationary Mean-Field Games (MFGs) characterized by finite state spaces and an infinite-horizon, discounted-reward setting. Although the resulting optimization problem is non-convex with respect to policies, we reformulate it as a convex optimization problem in terms of state-action occupation measures by leveraging the linear programming framework of Markov Decision Processes. Based on this convex reformulation, we introduce a gradient descent algorithm with a guaranteed convergence rate to efficiently compute the optimal solution. Moreover, we develop a new method that conceptualizes the MFG problem as a Generalized Nash Equilibrium Problem (GNEP), enabling effective computation of the mean-field equilibrium for forward reinforcement learning (RL) problems and marking an advancement in MFG solution techniques. We further illustrate the practical applicability of our GNEP approach by employing this algorithm to generate data for numerical MFG examples.}

\end{abstract}

\begin{keywords}
Mean-field games, inverse reinforcement learning, maximum causal entropy, discounted reward.
\end{keywords}

\section{Introduction}
\label{intro}

Mean-field games (MFGs) were introduced to analyze continuous-time differential games with a vast number of agents. These agents strategically interact through a mean-field term, capturing the average distribution of the population's states. This concept was pioneered in the works of \citet{HuMaCa06} and \citet{LaLi07}.

In stationary MFGs, a typical agent characterizes the collective behavior of other agents \citep{WeBeRo05} through a time-invariant distribution, which leads to a Markov Decision Process (MDP) constrained by the state's stationary distribution. In this case, the equilibrium, referred to as the ``stationary mean-field equilibrium", involves a policy and a distribution satisfying the Nash Certainty Equivalence (NCE) principle \citep{HuMaCa06}. According to this principle, the policy should be optimal under a specified distribution, assumed to be the stationary infinite population limit of the mean-field term. Additionally, when the generic agent applies this policy, the resulting stationary distribution of the agent's state should align with this distribution. Under relatively mild assumptions, the existence of a stationary MFE can be proven using Kakutani's fixed-point theorem. Furthermore, it can be established that with a sufficiently large number of agents, the policy in a stationary MFE approximates a Nash equilibrium for a finite-agent scenario \citep{AdJoWe15}.

\tk{Classical MFG theory excels at computing equilibria when well-defined reward functions are provided, often employing forward reinforcement learning (RL) techniques \citep[see][]{LaPePeGiMuElGePi24}. However, specifying the reward function in MFG problems can be difficult in practical applications. Inverse Reinforcement Learning (IRL) addresses this issue by inferring the reward function from expert demonstrations, aiming to develop policies that allow agents to effectively imitate the expert behavior. This method
has significant real-world applications in diverse fields such as finance \citep{BeViTrRe23}, autonomous vehicles \citep{ChLiDi23}, and epidemic control \citep{DoGaGa22}.
}


The potential benefits of IRL are multifaceted. Firstly, by learning from demonstrations, IRL facilitates the incorporation of human knowledge and guidance into the learning process. This can be particularly valuable in situations where reward functions are complex or difficult to define. Secondly, IRL promotes interpretability and transparency in agent behavior. By inferring the reward function from demonstrations, we gain insights into the agent's motivations and decision-making process, which is useful in safety-critical applications where human-agent collaboration is essential. Finally, IRL has the potential to improve agent generalization to unseen scenarios. By focusing on the underlying objectives, IRL might lead to agents that can adapt and perform well in situations not explicitly experienced during training \citep[see][]{AdCoBe22}.

Several recent papers have tackled the IRL problem in the context of MFGs. \citet{YaYe18proc} reduce a specific MFG to an MDP and employs the principle of maximum entropy to solve the corresponding IRL problem. However, this reduction from MFG to MDP is only applicable in fully cooperative settings, where all agents share the same societal reward. In contrast, typical MFGs involve decentralized information structures and misaligned objectives between agents. To address this, \citet{YaLiLiHu22} formulate the IRL problem for MFGs in a decentralized and non-cooperative setting, then tackle it using a maximum margin approach. \tk{ Compared to these works, the most relevant one to ours is \citet{ChZhLiWi23}, making it important to highlight the differences. A key difference lies in the problem setting: \citet{ChZhLiWi23} focus on a finite-horizon cost framework, whereas our work addresses the infinite-horizon case. As we discuss in the appendix, this distinction is crucial because, in the infinite-horizon setting, the problem cannot be formulated over the set of path probabilities, which makes a maximum likelihood approach infeasible. Another fundamental difference concerns the assumptions on the expert’s trajectory. \citet{ChZhLiWi23} assume that the expert follows an entropy-regularized mean-field equilibrium. This assumption enables them to recover the expert’s policy using a likelihood-based optimization framework. In contrast, our approach assumes that the expert trajectory is drawn from a standard (non-regularized) mean-field equilibrium. Given this trajectory, we construct the mean-field term and the expected feature vector, then seek policies that explain the observed behavior while maximizing entropy. Hence, the solution to our optimization problem--together with the mean-field term--yields a standard mean-field equilibrium rather than an entropy-regularized one. A further distinction lies in the uniqueness of the recovered solution. In \citet{ChZhLiWi23}, the authors ensure the uniqueness of the solution upfront by leveraging the variational formula. This guarantees that their approach recovers a unique policy that explains the observed behavior. In contrast, uniqueness is not required in our framework, multiple policies may be consistent with the observed behavior, and our goal is to select the one with the highest entropy among them. Overall, while our approaches differ in their technical formulations and underlying assumptions, they are complementary in nature. Exploring deeper connections between these formulations presents an interesting avenue for future research.}

Closely related to the IRL framework is imitation learning (IL). It offers a complementary approach to the IRL framework. While IRL infers the reward function from demonstrations, IL directly learns a policy that mimics the desired behavior. This can be particularly beneficial when expert demonstrations are readily available and the reward function is difficult to define explicitly. Recent work by \citet{GiPaOlNiMaMa23} showcases the potential of IL applications in the context of MFGs.

It is worth mentioning that the aforementioned papers focus solely on finite-horizon cost structures, leading to convex optimization problems that employ classical maximum entropy principle and maximum margin approach. However, the classical maximum entropy principle used in \citet{ChZhLiWi23} generally does not work for infinite-horizon problems due to the fact that the distribution induced by the state-action process on the path space becomes ill-defined in such cases. To address this limitation, \citet{ZhBlBa18} introduce the maximum causal entropy principle for infinite-horizon problems within MDPs. \tk{In this paper, we address the maximum entropy IRL problem in the context of infinite-horizon MFGs, where the expert trajectory is assumed to arise from a standard (non-regularized) MFE. Unlike the aforementioned approaches, our framework applies the maximum entropy principle without relying on entropy regularization. Specifically, we estimate the mean-field term and expected feature vector from this trajectory, identifying policies that maximize entropy to explain the observed behavior. Unlike approaches that enforce uniqueness of the MFE, we select the highest entropy solution from among potentially multiple MFEs that align with the behavior.
}


\subsection{Contributions}
\tk{
\begin{itemize}
\item[1.] We introduce the maximum causal entropy IRL problem for discrete-time stationary MFGs. This is an extension of the framework initially established for MDPs in \citet{ZhBlBa18}. We address scenarios in both MDPs and MFGs where the optimality criterion is characterized by an unknown infinite-horizon discounted reward.
\item[2.] The Maximum Causal Entropy IRL problem for MFGs is inherently non-convex. To address this, we reformulate it as a convex optimization problem using a linear programming framework based on state-action occupation measures. Within this dual context, we demonstrate the smoothness and strong convexity of the objective function. Then, we apply a gradient descent algorithm with a constant learning rate to find the optimal solution to the dual problem, ensuring a guaranteed convergence rate (Section~\ref{me-irl-mfg}).
\item[3.] We propose an alternative approach that operates under more relaxed assumptions to overcome restrictive conditions typically required for contraction-based methods in computing MFE. By reframing the MFG problem as a Generalized Nash Equilibrium Problem (GNEP) through the linear programming (LP) framework of MDPs, we adapt an algorithm from the GNEP literature to develop a novel MFE computation algorithm. This approach reduces dependency on stringent conditions such as Lipschitz continuity and strong convexity. Hence, the GNEP formulation provides a more flexible framework that can handle a broader class of MFGs. We demonstrate the method's practical applicability by using this algorithm to generate data for numerical examples (Section~\ref{mfg-gnep}). \end{itemize}
}

\noindent\textbf{Notation.} 
For a finite set $\sE$, we let $\Pnew(\sE)$ denote the set of all probability distributions on $\sE$ endowed with the $l_2$-norm $\|\cdot\|$. For any $e \in \sE$, $\delta_e$ is the Dirac delta measure. For any $a,b \in \R^d$, $\langle a,b \rangle$ denotes the inner product. The notation $v\sim \nu$ means that the random element $v$ has distribution $\nu$.

\section{Maximum Causal Entropy Principle in MFGs}\label{me-irl-mfg}


\tk{
Drawing on the MDP fundamentals established in the appendix, this section extends the maximum causal entropy IRL framework to the realm of MFGs. We adapt the existing formulation for infinite-horizon MDPs to accommodate the unique characteristics of MFGs, specifically the interplay between individual agent behavior and the aggregate behavior of the entire population. This adaptation ensures that the principles of maximum causal entropy remain both valid and insightful within the MFG context.
}

A discrete-time stationary MFG is specified by
$
\left( \sX,\sA,p,r,\beta \right),
$
where $\sX$ is the finite state space and $\sA$ is the finite action space. The components $p : \sX \times \sA \times \Pnew(\sX) \to \Pnew(\sX)$ and $r: \sX \times \sA \times \Pnew(\sX) \rightarrow [0,\infty)$ are the transition probability and the one-stage reward function, respectively. Therefore, given current state $x(t)$, action $a(t)$, and state-measure $\mu$, the reward $r(x(t),a(t),\mu)$ is received immediately, and the next state $x(t+1)$ evolves to a new state probabilistically according to the following distribution:
$$
x(t+1) \sim p(\cdot|x(t),a(t),\mu).
$$

In MFGs, a state-measure $\mu \in \Pnew(\sX)$ represents the collective behaviour of the other agents, that is, $\mu$ can be considered as the infinite population limit of the empirical distribution of the states of other agents.\footnote{\tk{In classical mean-field game literature, the exogenous behavior of other agents is typically modeled by a time-varying state measure-flow $\{ \mu(t) \}$, where $\mu(t) \in \Pnew(\sX)$ for all $t$. This implies a non-stationary total population behavior. However, our work only focuses on the stationary case, assuming a constant state measure-flow $\mu(t) = \mu$ for all $t$.}} 

To complete the description of the model dynamics, we should also specify how the agent selects its action. To that end, a policy $\pi$ is a conditional distribution on $\sA$ given $\sX$, that is, $\pi: \sX \rightarrow \Pnew(\sA)$. Let $\Pi$ denote the set of all policies. 

Before presenting the optimality notion adapted to MFGs, we first introduce the discounted reward of any policy given any state measure.  
In discounted MFGs, for a fixed $\mu$, the infinite-horizon discounted-reward function of any policy $\pi$ is given by
\begin{align}
J_{\mu}(\pi,\mu_0) &= E^{\pi}\biggl[ \sum_{t=0}^{\infty} \beta^t r(x(t),a(t),\mu) \biggr], \nonumber 
\end{align}
where $\beta \in (0,1)$ is the discount factor and $x(0) \sim \mu_0$ is the initial state distribution. In this model, we define the set-valued mapping $\Psi : \Pnew(\sX) \rightarrow 2^{\Pi}$  as follows (here, $2^{\Pi}$ is the collection of all subsets of $\Pi$):
$$\Psi(\mu) = \{\hat{\pi} \in \Pi: J_{\mu}(\hat{\pi},\mu) = \sup_{\pi} J_{\mu}(\pi,\mu) \}.$$
The set $\Psi(\mu)$ indeed represents the optimal policies for a specified $\mu$.
Similarly, we define the set-valued mapping $\Lambda : \Pi \to 2^{\Pnew(\sX)}$ as follows: for any $\pi \in \Pi$, the state-measure $\mu_{\pi} \in \Lambda(\pi)$ is an invariant distribution of the transition probability $p(\,\cdot\,|x,\pi,\mu_{\pi})$, that is, 
\begin{align}
\mu_{\pi}(\,\cdot\,) = \sum_{x \in \sX} p(\,\cdot\,|x,a,\mu_{\pi}) \, \pi(a|x) \, \mu_{\pi}(x). \nonumber
\end{align}
Then, the notion of equilibrium for the MFG is defined as follows.

\tk{\begin{definition}
	A pair $(\pi_*,\mu_*) \in \Pi \times \Pnew(\sX)$ is called a mean-field equilibrium if $\pi_* \in \Psi(\mu_*)$ and $\mu_* \in \Lambda(\pi_*)$.
\end{definition}
}

Since IRL seeks to infer the reward function from expert demonstrations, it is important to impose some structure on the set of possible rewards to keep the problem manageable. As with MDPs, a common assumption in IRL for MFGs is that the reward can be represented as a linear combination of feature vectors, which depend on the state, action, and mean-field term:
$$
\cR \coloneqq \left\{r(x,a,\mu) = \langle \theta, f(x,a,\mu) \rangle: \theta \in \R^k, \,\, f:\sX\times\sA\times\Pnew(\sX) \rightarrow \R^k \right\}.
$$
In this context, $f(x,a,\mu) \in \R^k$ is the feature vector that captures information about the corresponding state $x$, action $a$, and the mean-field term $\mu$.

In the IRL setting, we suppose that an expert generates trajectories 
$$
\D=\left\{\left(x_i(t),a_i(t)\right)_{t\geq0}\right\}_{i=1}^d
$$
under some mean-field equilibrium $(\pi_E,\mu_E)$. Since $\mu_E$ is the stationary distribution of the transition probability under policy $\pi_E$ when the mean-field term in state dynamics is $\mu_E$, the ergodic theorem implies (as $\mu_E$ is the invariant distribution of the transition probability $p(\cdot|x,\pi_E,\mu_E)$) that
$$
\lim_{T\rightarrow \infty} \frac{1}{T} \sum_{t=0}^T \left( \frac{1}{d} \sum_{i=1}^d 1_{\{x_i(t)=x\}} \right) = \mu_E(x)
$$
for all $x \in \sX$. In the above limit, it is sufficient to consider only one sample path in $\D$. To obtain a more robust estimate of $\mu_E$, one can use all sample paths in $\D$. 
\tk{Furthermore, when $d$ is sufficiently large, we can approximate the feature expectation vector as
\begin{equation}\label{feat_exp_vec}
\frac{1}{d} \sum_{i=1}^d \left( \sum_{t=0}^{\infty} \beta^t \, f(x_i(t),a_i(t),\mu_E) \right) \simeq \langle f \rangle_{\pi_E,\mu_E},
\end{equation}
where $\langle f \rangle_{\pi_E,\mu_E} \coloneqq E^{\pi_E,\mu_E}\left[\sum_{t=0}^{\infty} \beta^t \,f(x(t),a(t),\mu_E)\right]$ represents the expected cumulative discounted feature value under the MFE $(\pi_E,\mu_E)$, with the initial distribution also given by $\mu_E$.
}

Therefore, in the remainder of this paper, we suppose that \tk{the} discounted feature expectation vector $\langle f \rangle_{\pi_E,\mu_E} $ under $(\pi_E,\mu_E)$ and the mean-field term $\mu_E$ are given.

\ns{
\begin{remark}
\tk{In the literature on IRL for MDPs, the feature expectation vector is commonly assumed to be known. Extending this to MFGs, recent studies \citep{YaLiLiHu22,ChZhLiWi23} demonstrate that an unbiased estimate of the mean-field term can be obtained from agent trajectories. By the ergodic theorem, this estimate converges to the true mean-field term with probability 1. Therefore, the assumption that the true mean-field term \(\mu_E\) is known does not impose a significant practical restriction.\footnote{\tk{Using estimates instead of true quantities produces an approximate MFE without fundamentally changing the problem's structure, although estimation errors may occur. Since the error analysis is straightforward and adds little insight, we do not explore it further.}}
}


\end{remark}}
We define the maximum discounted causal entropy IRL problem as follows:
\renewcommand\arraystretch{1.5}
\[
\begin{array}{lll}
\mathbf{(OPT_1)} \,\, \text{maximize}_{\pi} \text{ } &H(\pi) 
\\
\phantom{\mathbf{(OPT_1)} \,\, } \text{subject to} & \pi(a|x) \geq 0 \,\, \forall (x,a) \in \sX \times \sA   \\
\phantom{x} & \sum_{a \in \sA} \pi(a|x) = 1 \,\, \forall x \in \sX \\
\phantom{x} & \mu_E(x) = \sum_{(a,y) \in \sA \times \sX} p(x|y,a,\mu_E) \, \pi(a|y) \, \mu_E(y) \,\, \forall x \in \sX \\
 \phantom{x} & \sum_{t=0}^{\infty} \beta^t \, E^{\pi,\mu_E}[f(x(t),a(t),\mu_E)] = \langle f \rangle_{\pi_E,\mu_E},
\end{array}
\]
\tk{
where $$
H(\pi) = \sum_{t=0}^{\infty} \beta^t E^{\pi,\mu_E} \left[-\log \, \pi(a(t)|x(t)) \right]
$$
is the discounted causal entropy of the policy $\pi$.}
In this problem, the expert behaves according to some mean-field equilibrium $(\pi_E,\mu_E)$ under some unknown reward function $r_E(x,a,\mu) = \langle \theta_E,f(x,a,\mu) \rangle$. Therefore, $\pi_E$ is the optimal policy for $\mu_E$ under $r_E$. On the other hand, $\mu_E$ is the stationary distribution of the state under policy $\pi_E$ and the initial distribution $\mu_E$, when the mean-field term in state dynamics is $\mu_E$. Typically, there can be many  $\theta$ values that can explain this behavior, much like in the setting of MDPs. To address this inherent ambiguity, we employ the maximum causal entropy principle, which dictates that when confronted with multiple candidates explaining the behavior, one should select the one with the highest causal entropy. This allows us to avoid any bias except for the bias introduced by the feature expectation constraint.

\tk{Let $\pi^*$ be the solution of $\mathbf{(OPT_1)}$.  Then, we have
$$
E^{\pi^*,\mu_E}[f(x(t),a(t),\mu)] = \langle f \rangle_{\pi_E,\mu_E},
$$
and from \eqref{feat_exp_vec}, it follows that
$$
E^{\pi^*,\mu_E}[r_E(x(t),a(t),\mu)] = E^{\pi_E,\mu_E}[r_E(x(t),a(t),\mu_E)]. 
$$}
Therefore, $\pi^*$ is also an optimal policy for $\mu_E$ likewise $\pi_E$ (i.e., $\pi^* \in \Psi(\mu_E)$). From
$$
\mu_E(x) = \sum_{(a,y) \in \sA \times \sX} p(x|y,a,\mu_E) \, \pi^*(a|y) \, \mu_E(y) \,\, \forall x \in \sX,
$$
it follows that $\mu_E \in \Lambda(\pi^*)$; and therefore, $(\pi^*,\mu_E)$ is a mean-field equilibrium as well. Hence, solving $\mathbf{(OPT_1)}$ also leads to a MFE, similar to the MDP setting.

\begin{remark}
In $\mathbf{(OPT_1)}$, we can include the mean-field term $\mu$ as a variable in addition to $\pi$ if we suppose that $\mu_E$ is not available to us and obtain
\renewcommand\arraystretch{1.5}
\[
\begin{array}{lll}
\mathbf{(OPT_o)} \,\, \text{maximize}_{\pi,\mu} \text{ } &H(\pi,\mu) 
\\
\phantom{\mathbf{(OPT_1)} \,\, } \text{subject to} & \pi(a|x) \geq 0 \,\, \forall (x,a) \in \sX \times \sA   \\
\phantom{x} & \sum_{a \in \sA} \pi(a|x) = 1 \,\, \forall x \in \sX \\
\phantom{x} & \sum_{x \in \sX} \mu(x) = 1\\
\phantom{x} & \mu(x) \geq 0 \,\, \forall x \in \sX \\
\phantom{x} & \mu(x) = \sum_{(a,y) \in \sA \times \sX} p(x|y,a,\mu) \, \pi(a|y) \, \mu(y) \,\, \forall x \in \sX \\
 \phantom{x} & \sum_{t=0}^{\infty} \beta^t \, E^{\pi,\mu}[f(x(t),a(t),\mu)] = \langle f \rangle_{\pi_E,\mu_E},
\end{array}
\]
where $H(\pi,\mu)$ is defined as 
$$
H(\pi,\mu) \coloneqq \sum_{t=0}^{\infty} \beta^t E^{\pi,\mu} \left[-\log \, \pi(a(t)|x(t)) \right]. 
$$
If $(\pi^*,\mu^*)$ is the optimal solution to $\mathbf{(OPT_o)}$, a potential issue arises when $\mu^* \neq \mu_E$. In this scenario, the policy $\pi^*$ no longer qualifies as an optimal policy for either $\mu^*$ or $\mu_E$ because the condition $\sum_{t=0}^{\infty} \beta^t \, E^{\pi^*,\mu^*}[f(x(t),a(t),\mu^*)] = \langle f \rangle_{\pi_E,\mu_E}$ does not imply either of these optimality results. Consequently, $(\pi^*,\mu^*)$ and $(\pi^*,\mu_E)$ cannot be considered a mean-field equilibrium, which is an undesirable outcome.
\end{remark}

\tk{
\begin{remark}
Entropy-regularized MFGs, as developed in \citet{CuKo21,AnKaSa23r}, ensure the uniqueness of the MFE by incorporating an entropy term into the reward structure. This regularization smooths the optimization landscape, improves convergence, and ensures the uniqueness of equilibrium solutions by favoring balanced policies. However, it leads to an approximate MFE that may deviate from the true equilibrium of the original game. An alternative approach to achieving a unique MFE is the use of $\mathbf{(OPT_1)}$ which selects the MFE that maximizes entropy from among all possible equilibria corresponding to $\mu_E$.
In contrast to entropy-regularized MFGs, which yield an approximate MFE, the proposed method delivers an exact MFE that is characterized by the highest entropy. This results in a more stable and balanced solution akin to that provided by entropy regularization. Nevertheless, effective implementation of this approach necessitates the computation of the MFE and the generation of the feature expectation vector, or alternatively, the direct construction of this vector, which can be computationally demanding.
 \end{remark}
}

\subsection{Convexification of $\mathbf{(OPT_1)}$} 
 
Note that $\mathbf{(OPT_1)}$ is not a convex optimization problem as its last constraint, required for discounted feature expectation matching, is not convex in $\pi$. To convexify the problem, we employ normalized occupation measures induced by policies, a technique similarly used in \citet{ZhBlBa18} to address the IRL problem in the infinite-horizon MDPs. For any policy $\pi$, we define the state-action normalized occupation measure as 
$$
\nu_{\pi}(x,a) \coloneqq (1-\beta) \, \sum_{t=0}^{\infty} \beta^t \, E^{\pi,\mu_E} \left[ 1_{\{(x(t),a(t)) = (x,a)\}} \right]. 
$$
The constant factor $(1-\beta)$ in the definition makes $\nu_{\pi}$ a probability measure. \tk{Let us also define the corresponding marginal occupation measure as $\nu_{\pi}^{\sX}(x) \coloneqq \sum_{a \in \sA} \nu_{\pi}(x,a)$.} The following result establishes a flow condition satisfied by state-action normalized occupation measure under feasible policies of $\mathbf{(OPT_1)}$.

\begin{lemma}\label{a-lemma1}
\tk{If $\pi$ is a feasible point for $\mathbf{(OPT_1)}$, then 
	$\nu_{\pi}^{\sX}(x)= \mu_E(x)$ for every $x\in \sX$. Moreover, $\nu_{\pi}$ satisfies the following flow condition:
$$
\nu_{\pi}^{\sX}(x) = \sum_{(y,a) \in \sX \times \sA} p(x|y,a,\mu_E) \, \nu_{\pi}(y,a) \,\, \forall x \in \sX.
$$}
\end{lemma}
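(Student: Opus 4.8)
The plan is to exploit the invariance constraint of $\mathbf{(OPT_1)}$ to show that the time-$t$ marginal of the state process never drifts away from $\mu_E$, after which both claims collapse to elementary computations. Write $\mu_t(x) \coloneqq E^{\pi,\mu_E}[1_{\{x(t)=x\}}]$ for the law of $x(t)$ under $\pi$ (with mean-field term $\mu_E$ in the dynamics and initial law $\mu_0=\mu_E$, as fixed in the definition of $E^{\pi,\mu_E}$). The Markov dynamics give the recursion
$$
\mu_{t+1}(x) = \sum_{(y,a)\in\sX\times\sA} p(x|y,a,\mu_E)\,\pi(a|y)\,\mu_t(y).
$$
First I would argue by induction on $t$ that $\mu_t = \mu_E$ for every $t$: the base case is the choice $\mu_0=\mu_E$, and for the inductive step, substituting $\mu_t=\mu_E$ into the recursion and invoking the third (invariance) constraint of $\mathbf{(OPT_1)}$ returns $\mu_{t+1}=\mu_E$. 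This is the crux of the argument and the only place feasibility of $\pi$ is used.

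Given this, the first claim is immediate: since $\nu_{\pi}^{\sX}(x) = (1-\beta)\sum_{t=0}^{\infty}\beta^t\mu_t(x) = (1-\beta)\,\mu_E(x)\sum_{t=0}^{\infty}\beta^t$, summing the geometric series with ratio $\beta\in(0,1)$ yields $\nu_{\pi}^{\sX}(x)=\mu_E(x)$. For the flow condition I would first unpack the joint occupation measure: because $a(t)$ is drawn from $\pi(\cdot|x(t))$, we have $E^{\pi,\mu_E}[1_{\{(x(t),a(t))=(y,a)\}}]=\mu_t(y)\,\pi(a|y)=\mu_E(y)\,\pi(a|y)$, so summing over $t$ gives the closed form $\nu_{\pi}(y,a)=\mu_E(y)\,\pi(a|y)$. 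Substituting this into the right-hand side of the flow condition and applying the invariance constraint once more,
$$
\sum_{(y,a)\in\sX\times\sA} p(x|y,a,\mu_E)\,\nu_{\pi}(y,a) = \sum_{(y,a)\in\sX\times\sA} p(x|y,a,\mu_E)\,\pi(a|y)\,\mu_E(y) = \mu_E(x),
$$
which equals $\nu_{\pi}^{\sX}(x)$ by the first part, completing the proof.

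The argument is essentially elementary once one sees that feasibility pins every time-$t$ marginal to $\mu_E$; the only genuine (and minor) obstacle is justifying the interchange of the infinite sum over $t$ with the finite sums over $\sX\times\sA$, which follows from nonnegativity and the absolute convergence guaranteed by $\beta<1$ on a finite state-action space. An alternative route would avoid the explicit induction by reindexing the series to obtain the standard Bellman flow identity $\nu_{\pi}^{\sX}(x) = (1-\beta)\,\mu_E(x) + \beta\sum_{(y,a)} p(x|y,a,\mu_E)\,\nu_{\pi}(y,a)$ and then combining it with $\nu_{\pi}^{\sX}=\mu_E$ to cancel the $(1-\beta)\mu_E(x)$ term; I nonetheless expect the inductive route above to be the cleanest presentation.
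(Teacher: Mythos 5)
Your proposal is correct and follows essentially the same route as the paper: both arguments hinge on the observation that feasibility plus the initial condition $x(0)\sim\mu_E$ pins every time-$t$ marginal to $\mu_E$ (your induction merely makes explicit what the paper states in one line), after which $\nu_{\pi}^{\sX}=\mu_E$ follows by summing the geometric series and the flow condition follows by substitution. The paper phrases the last step via the general Bellman flow identity rather than the closed form $\nu_{\pi}(y,a)=\pi(a|y)\,\mu_E(y)$, but this is exactly the ``alternative route'' you already note, so the difference is purely presentational.
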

\begin{proof}
Without any constraint on $\pi$, this occupation measure satisfies the  Bellman flow condition
$$
\nu_{\pi}^{\sX}(x) = (1-\beta) \, \mu_E(x) + \beta \, \sum_{(y,a) \in \sX \times \sA} p(x|y,a,\mu_E) \, \nu_{\pi}(y,a) 
$$
for all $x \in \sX$. Note that $\nu_{\pi}$ can be disintegrated as 
$$
\nu_{\pi}(x,a) = \pi(a|x) \, \nu_{\pi}^{\sX}(x),
$$
where
$$
\nu_{\pi}^{\sX}(x) = (1-\beta) \, \sum_{t=0}^{\infty} \beta^t \, E^{\pi,\mu_E} \left[ 1_{\{x(t)= x\}} \right] = (1-\beta) \, \sum_{t=0}^{\infty} \beta^t \, {\Law}^{\pi,\mu_E}\{x(t)\}(x).
$$
If $\pi$ is a feasible point for $\mathbf{(OPT_1)}$, then it satisfies the following additional constraint
$$
\mu_E(x) = \sum_{(a,y) \in \sA \times \sX} p(x|y,a,\mu_E) \, \pi(a|y) \, \mu_E(y) \,\, \forall x \in \sX.
$$
Hence, we have ${\Law}^{\pi,\mu_E}\{x(t)\} = \mu_E$ for all $t\geq0$, as $x(0) \sim \mu_E$. This implies $\nu_{\pi}^{\sX}= \mu_E$, and the Bellman flow condition in this case can be written as
$$
\nu_{\pi}^{\sX}(x) = \sum_{(y,a) \in \sX \times \sA} p(x|y,a,\mu_E) \, \nu_{\pi}(y,a) \,\, \forall x \in \sX. 
$$
\end{proof}

Using Lemma~\ref{a-lemma1}, we can also prove the following.
\begin{lemma}\label{a-lemma2}
If $\pi$ is a feasible point for $\mathbf{(OPT_1)}$, then $H(\pi)$ can be written as
\begin{align*}
H(\pi) &= \frac{1}{1-\beta} \,  \sum_{(x,a) \in \sX \times \sA} -\log\left(\frac{\nu_{\pi}(x,a)}{\mu_E(x)}\right) \, \nu_{\pi}(x,a).
\end{align*}
\end{lemma}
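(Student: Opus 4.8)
The plan is to express $H(\pi)$ directly in terms of the occupation measure $\nu_{\pi}$ by interchanging the summation over time with the summation over state-action pairs, and then to substitute the disintegration of $\nu_{\pi}$ supplied by Lemma~\ref{a-lemma1}. First I would rewrite the expectation inside $H(\pi)$ as an explicit sum against the time-$t$ law of the state-action pair,
$$
E^{\pi,\mu_E}\bigl[-\log\pi(a(t)|x(t))\bigr] = \sum_{(x,a) \in \sX \times \sA} -\log\pi(a|x)\,{\Law}^{\pi,\mu_E}\{(x(t),a(t))\}(x,a),
$$
so that
$$
H(\pi) = \sum_{t=0}^{\infty}\beta^t \sum_{(x,a)\in\sX\times\sA} -\log\pi(a|x)\,{\Law}^{\pi,\mu_E}\{(x(t),a(t))\}(x,a).
$$
Since $\pi(a|x)\le 1$, each factor $-\log\pi(a|x)$ is nonnegative, so every summand is nonnegative and Tonelli's theorem lets me swap the two sums freely. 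Recognizing the inner sum over $t$ as $(1-\beta)^{-1}\nu_{\pi}(x,a)$ by definition of the normalized occupation measure then gives
$$
H(\pi) = \frac{1}{1-\beta}\sum_{(x,a)\in\sX\times\sA} -\log\pi(a|x)\,\nu_{\pi}(x,a).
$$

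Next I would invoke the disintegration $\nu_{\pi}(x,a) = \pi(a|x)\,\nu_{\pi}^{\sX}(x)$ recorded in the proof of Lemma~\ref{a-lemma1}, together with the conclusion $\nu_{\pi}^{\sX}=\mu_E$ of that lemma, which holds because $\pi$ is feasible for $\mathbf{(OPT_1)}$. For every state $x$ with $\mu_E(x)>0$ this yields $\pi(a|x) = \nu_{\pi}(x,a)/\mu_E(x)$, and substituting this into the previous display produces exactly the claimed identity
$$
H(\pi) = \frac{1}{1-\beta}\sum_{(x,a)\in\sX\times\sA} -\log\left(\frac{\nu_{\pi}(x,a)}{\mu_E(x)}\right)\,\nu_{\pi}(x,a).
$$
The states with $\mu_E(x)=0$ need only a short remark: there Lemma~\ref{a-lemma1} forces $\nu_{\pi}^{\sX}(x)=0$, hence $\nu_{\pi}(x,a)=0$ for all $a$, so these indices contribute nothing to either side under the convention $0\log 0 = 0$.

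The only point requiring genuine care—the main, though mild, obstacle—is the bookkeeping around degenerate terms. One must adopt the convention that the summand vanishes whenever $\nu_{\pi}(x,a)=0$, so that actions chosen with probability zero (where $-\log\pi(a|x)=+\infty$ but the associated $\nu_{\pi}(x,a)$ is zero) as well as unvisited states drop out of both the intermediate expression and the final one. With this convention the Tonelli interchange is legitimate and the substitution $\pi(a|x)=\nu_{\pi}(x,a)/\mu_E(x)$ is well defined on the support of $\nu_{\pi}$, after which the identity follows immediately.
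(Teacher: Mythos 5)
Your proposal is correct and follows essentially the same route as the paper's proof: rewrite the expectation as a sum over state-action pairs, interchange the sums to recognize $\frac{1}{1-\beta}\nu_{\pi}(x,a)$, then substitute the disintegration $\nu_{\pi}(x,a)=\pi(a|x)\,\nu_{\pi}^{\sX}(x)$ together with $\nu_{\pi}^{\sX}=\mu_E$ from Lemma~\ref{a-lemma1}. Your added care about the Tonelli interchange and the $0\log 0$ convention is a welcome refinement of details the paper leaves implicit, but it does not change the argument.
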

\begin{proof}
\ns{
Recall that, for any policy $\pi$, we have
\begin{align*}
H(\pi) &= \sum_{t=0}^{\infty} \beta^t E^{\pi,\mu_E} \left[-\log \, \pi(a(t)|x(t)) \right] \\
&= \sum_{t=0}^{\infty} \beta^t \sum_{(x,a) \in \sX \times \sA} E^{\pi,\mu_E} \left[-\log \, \pi(a|x) \, 1_{\{(x(t),a(t)) = (x,a)\}} \right] \\
&= \sum_{(x,a) \in \sX \times \sA} \sum_{t=0}^{\infty} \beta^t \, E^{\pi,\mu_E} \left[-\log \, \pi(a|x) \, 1_{\{(x(t),a(t)) = (x,a)\}} \right] \\
&= \frac{1}{1-\beta} \,  \sum_{(x,a) \in \sX \times \sA} -\log\left(\pi(a|x)\right) \, \nu_{\pi}(x,a).
\end{align*}
}
Namely, one can write the discounted causal entropy of any policy $\pi$ as
$$
H(\pi) = \frac{1}{1-\beta} \,  \sum_{(x,a) \in \sX \times \sA} -\log\left(\pi(a|x)\right) \, \nu_{\pi}(x,a)
$$
using the occupation measure $\nu_{\pi}$. Note that we can disintegrate $\nu_{\pi}$ as $$\nu_{\pi}(x,a) = \pi(a|x) \, \nu_{\pi}^{\sX}(x).$$ Hence, we can alternatively write the discounted causal entropy of the policy $\pi$ as
$$
H(\pi) = \frac{1}{1-\beta} \,  \sum_{(x,a) \in \sX \times \sA} -\log\left(\frac{\nu_{\pi}(x,a)}{\nu_{\pi}^{\sX}(x)}\right) \, \nu_{\pi}(x,a). 
$$
This is true for any $\pi \in \Pi$. If $\pi$ is feasible for $\mathbf{(OPT_1)}$, then by Lemma~\ref{a-lemma1}, $\nu_{\pi}^{\sX}= \mu_E$. Hence, the discounted causal entropy of $\pi$ can also be expressed in the form of
\begin{align*}
H(\pi) &= \frac{1}{1-\beta} \,  \sum_{(x,a) \in \sX \times \sA} -\log\left(\frac{\nu_{\pi}(x,a)}{\mu_E(x)}\right) \, \nu_{\pi}(x,a).
\end{align*}
\end{proof}
In addition to Lemma~\ref{a-lemma2}, the discounted feature expectation vector can also be written as
\begin{align}\label{a-eq1}
\sum_{t=0}^{\infty} \beta^t \, E^{\pi,\mu_E}[f(x(t),a(t),\mu_E)] \coloneqq \langle f \rangle_{\pi,\mu_E} &= \frac{1}{1-\beta} \,  \sum_{(x,a) \in \sX \times \sA} f(x,a,\mu_E) \, \nu_{\pi}(x,a).
\end{align}
Consequently, we can define the following (convex) optimization problem, which will be proven to be equivalent to $\mathbf{(OPT_1)}$: 
\renewcommand\arraystretch{1.5}
\[\displaystyle
\begin{array}{lll}
\mathbf{(OPT_2)} \,\, \text{maximize}_{\nu} \text{ } &\frac{1}{1-\beta} \,  \sum_{(x,a) \in \sX \times \sA} -\log\left(\frac{\nu(x,a)}{\mu_E(x)}\right) \, \nu(x,a) 
\\
\phantom{\mathbf{(OPT_2)} \,\,} \text{subject to} & \frac{1}{1-\beta} \,  \sum_{(x,a) \in \sX \times \sA} f(x,a,\mu_E) \, \nu(x,a) = \langle f \rangle_{\pi_E,\mu_E}   \\
\phantom{z} & \mu_E(x) = \sum_{(y,a) \in \sX \times \sA} p(x|y,a,\mu_E) \, \nu(y,a)  \,\, \forall x \in \sX \\
\phantom{x} & \nu^{\sX}(x) = \mu_E(x) \,\, \forall x \in \sX \\
 \phantom{x} & \nu(x,a) \geq 0 \,\, \forall (x,a) \in \sX \times \sA.
\end{array}
\]
\tk{Here, $\nu^{\sX}(x) \coloneqq \sum_{a \in \sA} \nu(x,a)$. Instead of formulating the problem over the set of policies, we express it with respect to the set of occupation measures induced by policies. This reformulation convexifies the problem, as demonstrated below.}
\begin{lemma}\label{a-lemma3}
The optimization problem $\mathbf{(OPT_2)}$ is convex.
\end{lemma}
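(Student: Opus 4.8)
The plan is to recognize $\mathbf{(OPT_2)}$ as the maximization of a concave objective over a convex feasible set, which is exactly what it means for the problem to be convex. First I would check that the feasible region is convex. Each of its defining constraints---the feature-matching equality $\frac{1}{1-\beta}\sum_{(x,a)} f(x,a,\mu_E)\,\nu(x,a) = \langle f\rangle_{\pi_E,\mu_E}$, the stationarity equality $\mu_E(x) = \sum_{(y,a)} p(x|y,a,\mu_E)\,\nu(y,a)$, the marginal equality $\nu^{\sX}(x) = \mu_E(x)$, and the nonnegativity constraints $\nu(x,a)\geq 0$---is affine in the decision variable $\nu$, since $\mu_E$, $p$, $f$, and $\langle f\rangle_{\pi_E,\mu_E}$ are all fixed data. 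An intersection of finitely many hyperplanes and half-spaces is a convex polytope, so the feasible set is convex.

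Next I would establish concavity of the objective. Writing a single summand term-by-term,
$$
-\log\left(\frac{\nu(x,a)}{\mu_E(x)}\right)\nu(x,a) = -\nu(x,a)\log\nu(x,a) + \nu(x,a)\log\mu_E(x),
$$
I would note that the second piece is linear in $\nu(x,a)$ (as $\log\mu_E(x)$ is a constant), hence concave, while the first piece is the scalar function $\phi(t) = -t\log t$ evaluated at $t = \nu(x,a)$. Extending $\phi$ continuously by $\phi(0)=0$, one has $\phi''(t) = -1/t < 0$ on $(0,\infty)$, so $\phi$ is concave on $[0,\infty)$. Thus each summand is concave as a function of $\nu$, and since concavity is preserved under summation over the finite index set $\sX\times\sA$ and under multiplication by the positive constant $\frac{1}{1-\beta}$, the full objective is concave.

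Combining the two observations finishes the argument: maximizing a concave function over a convex set is equivalent to minimizing its convex negative over the same set, which is by definition a convex optimization problem. The only delicate point---and the one I would flag as the main (albeit mild) obstacle---is the behavior of the objective at the boundary where $\nu(x,a)=0$ and the logarithm is singular. I expect to dispose of it by invoking the standard continuous extension $\phi(0)=0$ of $t\mapsto -t\log t$, under which concavity persists; moreover, for any $x$ outside the support of $\mu_E$ the constraints $\nu^{\sX}(x)=\mu_E(x)=0$ together with nonnegativity force $\nu(x,a)=0$, so the corresponding terms contribute $0$ and no genuine singularity appears on the feasible set.
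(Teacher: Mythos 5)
Your proof is correct and takes essentially the same route as the paper: affine constraints give a convex feasible set, and the objective is decomposed as $-\nu(x,a)\log\nu(x,a) + \nu(x,a)\log\mu_E(x)$, with the first piece concave (the paper simply invokes the known strong concavity of entropy where you verify $\phi''(t)=-1/t<0$) and the second linear. Your extra care with the boundary $\nu(x,a)=0$ is a welcome refinement but does not change the argument.
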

\begin{proof}
Firstly, the equality and inequality constraints in $\mathbf{(OPT_2)}$ are all linear in $\nu$. Moreover, the objective function is strongly concave in $\nu$. Indeed, rewriting the objective function as
$$
\frac{1}{1-\beta} \left(  \sum_{(x,a) \in \sX \times \sA} -\log\left(\nu(x,a)\right) \, \nu(x,a) + \sum_{(x,a) \in \sX \times \sA} \log\left(\mu_E(x)\right) \, \nu(x,a) \right),
$$
the first term in above sum is the entropy of the distribution $\nu$, which is known to be strongly concave in $\nu$, and the second term is linear in $\nu$. Hence, the objective function is strongly concave overall.
\end{proof} 
Now it is time to prove the equivalence of $\mathbf{(OPT_1)}$ and $\mathbf{(OPT_2)}$.

\begin{theorem}\label{equiv_thm}
The optimization problems $\mathbf{(OPT_1)}$ and $\mathbf{(OPT_2)}$ are equivalent; that is, there are mappings between feasible points and two equivalent feasible points under these mappings lead to the same objective value. 
\end{theorem}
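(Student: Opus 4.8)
The plan is to exhibit the two natural maps between the feasible sets and verify that each preserves both feasibility and the objective value. In one direction I would send a feasible policy $\pi$ of $\mathbf{(OPT_1)}$ to its normalized occupation measure $\nu_{\pi}$; in the other, I would send a feasible $\nu$ of $\mathbf{(OPT_2)}$ to the disintegrated policy $\pi_{\nu}(a|x) \coloneqq \nu(x,a)/\nu^{\sX}(x)$, defined on states with $\mu_E(x) > 0$ (on null states one has $\nu(x,\cdot)\equiv 0$ and the policy may be completed arbitrarily). The claim to establish is that these maps are mutually inverse on the feasible sets and carry one objective into the other, so that the optimal values coincide.

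For the forward map most of the work is already packaged in the lemmas. Given $\pi$ feasible for $\mathbf{(OPT_1)}$, Lemma~\ref{a-lemma1} yields $\nu_{\pi}^{\sX} = \mu_E$ together with the flow condition $\nu_{\pi}^{\sX}(x) = \sum_{(y,a)} p(x|y,a,\mu_E)\,\nu_{\pi}(y,a)$; these are exactly the third and second constraints of $\mathbf{(OPT_2)}$, while nonnegativity of $\nu_{\pi}$ is immediate. The feature constraint transfers through identity~\eqref{a-eq1}, since $\frac{1}{1-\beta}\sum_{(x,a)} f(x,a,\mu_E)\,\nu_{\pi}(x,a) = \langle f\rangle_{\pi,\mu_E} = \langle f\rangle_{\pi_E,\mu_E}$ by the last constraint of $\mathbf{(OPT_1)}$. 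Finally, Lemma~\ref{a-lemma2} identifies $H(\pi)$ with the $\mathbf{(OPT_2)}$ objective evaluated at $\nu_{\pi}$, so the objective is preserved.

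The backward map carries the real content, and I expect it to be the main obstacle. Given $\nu$ feasible for $\mathbf{(OPT_2)}$, I would first check that $\pi_{\nu}$ is a genuine policy: nonnegativity is clear, and $\sum_{a}\pi_{\nu}(a|x) = \nu^{\sX}(x)/\nu^{\sX}(x) = 1$. Using the disintegration $\nu(x,a) = \pi_{\nu}(a|x)\,\mu_E(x)$ (recall $\nu^{\sX} = \mu_E$), the second constraint of $\mathbf{(OPT_2)}$ becomes precisely the stationarity constraint of $\mathbf{(OPT_1)}$. The crucial remaining point is the feature constraint of $\mathbf{(OPT_1)}$, and for this I must show that the occupation measure induced by $\pi_{\nu}$ reproduces $\nu$, that is $\nu_{\pi_{\nu}} = \nu$. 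The argument I would run mirrors the proof of Lemma~\ref{a-lemma1}: since $\pi_{\nu}$ satisfies the stationarity constraint and $x(0)\sim\mu_E$, an induction on $t$ gives $\Law^{\pi_{\nu},\mu_E}\{x(t)\} = \mu_E$ for all $t$, whence $\nu_{\pi_{\nu}}^{\sX} = \mu_E = \nu^{\sX}$. Because both $\nu$ and $\nu_{\pi_{\nu}}$ disintegrate against the same conditional $\pi_{\nu}$ and share the same marginal $\mu_E$, they coincide. With $\nu_{\pi_{\nu}} = \nu$ in hand, the feature constraint of $\mathbf{(OPT_1)}$ follows from~\eqref{a-eq1} and the feature constraint of $\mathbf{(OPT_2)}$, and Lemma~\ref{a-lemma2} once more equates the objectives.

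I would close by observing that the two maps are inverse to one another: $\nu_{\pi_{\nu}} = \nu$ was just shown, and $\pi_{\nu_{\pi}} = \pi$ holds on the support of $\mu_E$ since $\nu_{\pi}$ disintegrates as $\pi(a|x)\,\mu_E(x)$. Hence feasible points correspond bijectively with matching objectives, so the optimal values of $\mathbf{(OPT_1)}$ and $\mathbf{(OPT_2)}$ agree. The only delicate bookkeeping is the treatment of states with $\mu_E(x) = 0$, where $\pi_{\nu}$ is underdetermined; as such states carry zero occupation mass and enter neither the constraints nor the objective (with the convention $0\log 0 = 0$), they are handled by fixing any arbitrary completion of the policy there.
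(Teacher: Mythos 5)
Your proposal is correct and follows essentially the same route as the paper's proof: the forward map via Lemmas~\ref{a-lemma1} and \ref{a-lemma2} together with \eqref{a-eq1}, and the backward map via the disintegration $\pi_{\nu}(a|x)=\nu(x,a)/\nu^{\sX}(x)$, the stationarity of $\Law^{\pi_{\nu},\mu_E}\{x(t)\}$, and the resulting identity $\nu_{\pi_{\nu}}=\nu$. Your added remarks on mutual inverseness and on states with $\mu_E(x)=0$ correspond to the paper's subsequent remark on bijectivity and do not change the argument.
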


\begin{proof}
Let $\pi$ be a feasible point for $\mathbf{(OPT_1)}$. Then, consider the corresponding occupation measure $\nu_{\pi}$. By Lemma~\ref{a-lemma1}, 
\tk{it follows that
$$
\mu_E(x) = \sum_{(y,a) \in \sX \times \sA} p(x|y,a,\mu_E) \, \nu_{\pi}(y,a),  \,\,  \forall x \in \sX.
$$
}
Moreover, from (\ref{a-eq1}), we also have 
$$
\frac{1}{1-\beta} \sum_{(x,a) \in \sX \times \sA} f(x,a,\mu_E) \, \nu_{\pi}(x,a) = \langle f \rangle_{\pi_E,\mu_E}.
$$
Therefore, $\nu_{\pi}$ is feasible for $\mathbf{(OPT_2)}$. On the other hand, from Lemma~\ref{a-lemma2} we obtain
\begin{align*}
H(\pi) &= \frac{1}{1-\beta} \,  \sum_{(x,a) \in \sX \times \sA} -\log\left(\frac{\nu_{\pi}(x,a)}{\mu_E(x)}\right) \, \nu_{\pi}(x,a),
\end{align*}
indicating that the objectives of $\pi$ and $\nu_{\pi}$ are equivalent. This completes the proof of the ``only if" part.

Conversely, let $\nu$ be a feasible point for $\mathbf{(OPT_2)}$. Then, define 
\begin{align*}
\pi_{\nu}(a|x) = \begin{cases}
\frac{\nu(x,a)}{\nu^{\sX}(x)} & \text{if} \,\, \nu^{\sX}(x) > 0, \\
\pi(a) & \text{if} \,\, \nu^{\sX}(x) = 0,
\end{cases}
\end{align*}
where $\pi \in \Pnew(\sX)$ is arbitrary.
Consider the occupation measure $\nu_{\pi_{\nu}}$ induced by $\pi_{\nu}$. Since  
\begin{align*}
\nu^{\sX}(x) &= \sum_{(y,a) \in \sX \times \sA} p(x|y,a,\mu_E) \, \pi_{\nu}(a|y) \, \nu^{\sX}(y)  \,\, \forall x \in \sX \\
\nu^{\sX}(x) &= \mu_E(x) \,\, \forall x \in \sX, \,\,\,\,
x(0)  \sim \mu_E,
\end{align*}
we have $\Law^{\pi_{\nu},\mu_E}\{x(t)\} = \nu^{\sX}$ for all $t \geq 0$. Also, since 
$$
\nu_{\pi_{\nu}}^{\sX} = (1-\beta) \, \sum_{t=0}^{\infty} \beta^t \, {\Law}^{\pi_{\nu},\mu_E}\{x(t)\},
$$ 
we have $\nu_{\pi_{\nu}}^{\sX} = \nu^{\sX} = \mu_E$. This implies that 
$$
\nu_{\pi_{\nu}}(x,a) = \pi_{\nu}(a|x) \, \nu_{\pi_{\nu}}^{\sX}(x) = \pi_{\nu}(a|x) \,\nu^{\sX}(x) = \nu(x,a) \,\, \forall (x,a) \in \sX\times\sA.
$$
This completes the proof of the ``if" part in view of the following:
\begin{align*}
H(\pi_{\nu}) &= \frac{1}{1-\beta} \,  \sum_{(x,a) \in \sX \times \sA} -\log\left(\frac{\nu_{\pi_{\nu}}(x,a)}{\mu_E(x)}\right) \, \nu_{\pi_{\nu}}(x,a)  & \text{(Lemma~\ref{a-lemma2})} \\
\sum_{t=0}^{\infty} \beta^t \, E^{\pi_{\nu},\mu_E}[f(x(t),a(t),\mu_E)] &= \frac{1}{1-\beta} \,  \sum_{(x,a) \in \sX \times \sA} f(x,a,\mu_E) \, \nu_{\pi_{\nu}}(x,a)  & \text{(Eq. (\ref{a-eq1}))}.
\end{align*}
\end{proof}

\begin{remark}
The relation between feasible points of $\mathbf{(OPT_1)}$ and $\mathbf{(OPT_2)}$ is bijective if $\mu_E(x) > 0$ for all $x \in \sX$. Indeed, we have 
\begin{align*}
\Pi \ni \pi &\mapsto \nu_{\pi} \eqqcolon (1-\beta) \, \sum_{t=0}^{\infty} \beta^t \, E^{\pi,\mu_E} \left[ 1_{\{(x(t),a(t)) \eqqcolon (x,a)\}} \right] \in \Pnew(\sX\times\sA) \\
\Pnew(\sX\times\sA) \ni \nu &\mapsto \pi_{\nu} = \frac{\nu(x,a)}{\nu^{\sX}(x)} \in \Pi.
\end{align*}
In both cases, we have $\nu_{\pi}^{\sX}(x) = \nu^{\sX}(x) = \mu_E(x) > 0$ for all $x \in \sX$ by assumption. This implies that the second mapping above is injective; and so, above mappings define a bijective relation between feasible points of $\mathbf{(OPT_1)}$ and $\mathbf{(OPT_2)}$. 
\end{remark}

\subsection{Dual of $\mathbf{(OPT_2)}$ and Its Solution}

In the remainder of this section, our aim is to establish an algorithm to compute the optimal solution of $\mathbf{(OPT_2)}$, as it is equivalent to $\mathbf{(OPT_1)}$. To this end, we first prove the following result.
\begin{proposition}\label{a-prop1}
The dual of $\mathbf{(OPT_2)}$ is 
$$
\min_{\btheta \in \R^k, \,  \blambda, \bxi \in \R^{\sX}} \bigg\{ \frac{1}{1-\beta} \, \log \sum_{(x,a) \in \sX \times \sA} e^{k_{\btheta,\blambda,\bxi}(x,a)}- \langle \btheta,\langle f \rangle_{\pi_E,\mu_E} \rangle- \sum_{x \in \sX} \blambda_x \, \mu_E(x) \bigg\},
$$
where 
$$
k_{\btheta,\blambda,\bxi}(x,a) \coloneqq \log \mu_E(x) + \langle \btheta, f(x,a,\mu_E) \rangle + (1-\beta) \left[ \blambda_x + \sum_{z \in \sX} \bxi_z \,  \left( p(z|x,a,\mu_E) - \mu_E(z) \right) \right].
$$
Moreover, the optimal value of the dual problem is equal to the optimal value of $\mathbf{(OPT_2)}$; that is, there is no duality gap. Finally, if $(\btheta^*,\blambda^*,\bxi^*)$ is the optimal solution of the dual problem, then the optimal solution of $\mathbf{(OPT_2)}$ is the following Boltzman distribution
$$
\nu^*_{\btheta^*,\blambda^*,\bxi^*}(x,a) \coloneqq \frac{e^{k_{\btheta^*,\blambda^*,\bxi^*}(x,a) }}{\sum_{(x,a) \in \sX \times \sA} e^{k_{\btheta^*,\blambda^*,\bxi^*}(x,a) }}.
$$
\end{proposition}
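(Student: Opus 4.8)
The plan is to compute the Lagrangian dual of $\mathbf{(OPT_2)}$ by the standard convex-duality route, treating the negative-entropy objective through its convex conjugate (the log-sum-exp / Gibbs variational principle). First I would record the structural observation that makes the log-partition form appear: summing the marginal constraint $\nu^{\sX}(x)=\mu_E(x)$ over $x\in\sX$ and using $\sum_x\mu_E(x)=1$ forces $\sum_{(x,a)}\nu(x,a)=1$, so every feasible $\nu$ lies in the probability simplex $\Pnew(\sX\times\sA)$. Hence I may add $\nu\in\Pnew(\sX\times\sA)$ to the domain without changing the feasible set, and then dualize only the three affine equality constraints -- feature matching (multiplier $\btheta\in\R^k$), the flow condition (multiplier $\bxi\in\R^{\sX}$), and the marginal condition (multiplier $\blambda\in\R^{\sX}$) -- while keeping the simplex constraint inside the inner maximization. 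It is precisely this choice that produces a normalized partition function rather than a bare sum of exponentials.

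Second, I would form the Lagrangian. Writing $\gamma=1/(1-\beta)$, the inner problem becomes
$$\sup_{\nu\in\Pnew(\sX\times\sA)}\Big\{\gamma\,\Big(-\!\!\sum_{(x,a)}\nu(x,a)\log\nu(x,a)\Big)+\sum_{(x,a)}c_{\btheta,\blambda,\bxi}(x,a)\,\nu(x,a)\Big\}+\text{(constant terms)},$$
where $c_{\btheta,\blambda,\bxi}(x,a)$ collects $\gamma\log\mu_E(x)$ together with the coefficients of $\nu(x,a)$ arising from the three dualized constraints, and the constant terms are $-\langle\btheta,\langle f\rangle_{\pi_E,\mu_E}\rangle$, $-\sum_z\bxi_z\mu_E(z)$, and $-\sum_x\blambda_x\mu_E(x)$. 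The key computational lemma I would invoke is the Gibbs variational identity: for any vector $d$, $\max_{\nu\in\Pnew}\{-\sum\nu\log\nu+\langle d,\nu\rangle\}=\log\sum_i e^{d_i}$, attained uniquely at $\nu_i\propto e^{d_i}$. Applying it with $d=c/\gamma$ turns the inner supremum into $\gamma\log\sum_{(x,a)}e^{c_{\btheta,\blambda,\bxi}(x,a)/\gamma}$. A short bookkeeping step then reconciles this with the stated form: absorbing the constant $-\sum_z\bxi_z\mu_E(z)$ into the exponent produces exactly $k_{\btheta,\blambda,\bxi}$ as defined, and the surviving constants collapse to $-\langle\btheta,\langle f\rangle_{\pi_E,\mu_E}\rangle-\sum_x\blambda_x\mu_E(x)$, yielding the claimed dual objective. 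The unique inner maximizer is simultaneously identified as the Boltzmann distribution $\nu\propto e^{k_{\btheta,\blambda,\bxi}}$.

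Third, for the no-duality-gap claim I would invoke strong duality for convex programs with affine equality constraints: the objective is (strongly) concave by the preceding lemma, all dualized constraints are affine, and the refined Slater condition for affine constraints only requires a feasible point in the relative interior of the objective's domain -- that is, a feasible $\nu$ with $\nu(x,a)>0$ for all $(x,a)$. Such a point should be exhibited using that the feasible set is a nonempty polytope (it contains the expert occupation measure $\nu_{\pi_E}$ by Lemma~\ref{a-lemma1}) meeting the relative interior of the simplex. With strong duality in hand, I would recover the primal optimum by stationarity: the first-order conditions $\nabla_\btheta=\nabla_\bxi=\nabla_\blambda=0$ at the dual minimizer $(\btheta^*,\blambda^*,\bxi^*)$ are exactly the feature-matching, flow, and marginal constraints evaluated at $\nu^*_{\btheta^*,\blambda^*,\bxi^*}$, so this Boltzmann distribution is primal-feasible; since it attains the common optimal value, it is the (unique, by strict concavity) maximizer of $\mathbf{(OPT_2)}$.

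The step I expect to be the main obstacle is the rigorous justification of the absence of a duality gap, specifically verifying the constraint qualification at the boundary of the objective's domain: the entropy term $-\nu\log\nu$ is continuous but nonsmooth as $\nu(x,a)\to 0$, so one must confirm that a \emph{strictly positive} feasible point exists (the relative interior of the simplex intersected with the affine constraints) rather than merely a feasible one -- otherwise the strictly positive Boltzmann form could not be the optimum. Everything else, namely the conjugate computation and the matching of constants, is routine once the simplex structure is exploited.
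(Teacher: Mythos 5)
Your computational core coincides with the paper's: the paper also keeps the simplex constraint inside the inner maximization, dualizes the three affine constraints, and collapses the inner problem via the identity $\log\sum_z e^{k(z)}=\max_{\nu\in\Pnew(\sZ)}\bigl[H(\nu)+\sum_z k(z)\,\nu(z)\bigr]$ (cited there as \citet[Proposition 1.4.2]{DuEl97}), with the same bookkeeping that folds $-\sum_z\bxi_z\mu_E(z)$ into the exponent using $\sum_{(x,a)}\nu(x,a)=1$. Your identification of the inner maximizer as the Boltzmann distribution is likewise the paper's argument.

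Where you diverge is the no-duality-gap step, and this is where your sketch has a genuine gap. The paper obtains the exchange of $\max_\nu$ and $\min_{\btheta,\blambda,\bxi}$ directly from Sion's minimax theorem: $\Pnew(\sX\times\sA)$ is compact and convex, the Lagrangian is concave and upper semicontinuous in $\nu$ and affine in the multipliers, so no constraint qualification is needed at all. You instead invoke refined Slater, which requires a feasible $\nu$ in the relative interior of the entropy's domain, i.e.\ with $\nu(x,a)>0$ for all $(x,a)$. You correctly flag this as the main obstacle, but the witness you propose, the expert occupation measure $\nu_{\pi_E}$, does not supply it: whenever $\pi_E$ puts zero mass on some action at some state, $\nu_{\pi_E}$ lies on the boundary of the simplex. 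This is not a pathological case --- in the paper's own numerical examples the equilibrium policy is (partly) deterministic and the matrix $\nu_E$ has zero entries. So a strictly positive feasible point is not automatic, and without it your Slater-based route does not close. The fix is simply to replace the Slater argument by the Sion (or any compactness-based minimax) argument, after which the rest of your proposal --- including recovering primal optimality of the Boltzmann distribution from the dual first-order conditions --- goes through as in the paper.
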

\begin{proof}
To obtain the dual form of $\mathbf{(OPT_2)}$, let us introduce the problem $\mathbf{(OPT_2)}$ in a $\min$-$\max$ formulation:
\begin{multline*}
\mathbf{(OPT_2)} 
= \max_{\nu \in \Pnew(\sX \times \sA)} \min_{\btheta \in \R^k, \,  \blambda, \bxi \in \R^{\sX}} \frac{1}{1-\beta} \left[ H(\nu) +  \sum_{(x,a) \in \sX \times \sA} k_{\btheta,\blambda,\bxi}(x,a) \, \nu(x,a) \right]  \\- \langle \btheta,\langle f \rangle_{\pi_E,\mu_E} \rangle - \sum_{x \in \sX} \blambda_x \, \mu_E(x).
\end{multline*}
Then, according to Sion's minimax theorem \citep[see][]{Sio58}, we can interchange the minimum and maximum in the expression above, leading to the dual form:
\begin{multline*}
\mathbf{(OPT_2)} 
= \min_{\btheta \in \R^k, \,  \blambda, \bxi \in \R^{\sX}} \max_{\nu \in \Pnew(\sX \times \sA)}  \frac{1}{1-\beta} \left[ H(\nu) +  \sum_{(x,a) \in \sX \times \sA} k_{\btheta,\blambda,\bxi}(x,a) \, \nu(x,a) \right] \\ - \langle \btheta,\langle f \rangle_{\pi_E,\mu_E} \rangle - \sum_{x \in \sX} \blambda_x \, \mu_E(x) \\
= \min_{\btheta \in \R^k, \,  \blambda, \bxi \in \R^{\sX}} \bigg\{ \frac{1}{1-\beta} \,  \max_{\nu \in \Pnew(\sX \times \sA)} \left[ H(\nu) +  \sum_{(x,a) \in \sX \times \sA} k_{\btheta,\blambda,\bxi}(x,a) \, \nu(x,a) \right] \\ - \langle \btheta,\langle f \rangle_{\pi_E,\mu_E} \rangle - \sum_{x \in \sX} \blambda_x \, \mu_E(x) \bigg\} \\
= \min_{\btheta \in \R^k, \,  \blambda, \bxi \in \R^{\sX}} \bigg\{ \frac{1}{1-\beta} \, \log \sum_{(x,a) \in \sX \times \sA} e^{k_{\btheta,\blambda,\bxi}(x,a)}- \langle \btheta,\langle f \rangle_{\pi_E,\mu_E} \rangle- \sum_{x \in \sX} \blambda_x \, \mu_E(x) \bigg\}. 
\end{multline*}
Here, the last equality follows from the variational formula
$$
\log \sum_{z \in \sZ}e^{k(z)} = \max_{\nu \in \Pnew(\sZ)} \left[H(\nu) + \sum_{z \in \sZ} k(z) \, \nu(z) \right]
$$
\citep[see][Proposition 1.4.2]{DuEl97}.\footnote{Normally, in large deviation theory, the variational formula is formulated using relative entropy. However, for finite spaces, the above result can be obtained by considering the relationship between the entropy of a distribution and the relative entropy of that distribution with respect to the uniform distribution.}  This implies that above minimization problem is the dual of $\mathbf{(OPT_2)}$ and by Sion's minimax theorem, there is no duality gap. Moreover, for a given $\btheta \in \R^k, \,  \blambda, \bxi \in \R^{\sX}$,  the probability measure $\nu^*_{\btheta,\blambda,\bxi}$ that maximizes 
$$H(\nu) +  \sum_{(x,a) \in \sX \times \sA} k_{\btheta,\blambda,\bxi}(x,a) \, \nu(x,a)$$ 
is the Boltzman distribution defined by \tk{
$$
\nu^*_{\btheta,\blambda,\bxi}(x,a) \coloneqq \frac{e^{k_{\btheta,\blambda,\bxi}(x,a) }}{Z_{\btheta,\blambda,\bxi}},
$$
where
$
Z_{\btheta,\blambda,\bxi} = \sum_{(x,a) \in \sX \times \sA} e^{k_{\btheta,\blambda,\bxi}(x,a)}
$} \citep[Proposition 1.4.2]{DuEl97}. This completes the proof of the last assertion.
\end{proof}

We denote the objective function in dual formulation of $\mathbf{(OPT_2)}$ as
\begin{align*}
&g(\btheta,\blambda, \bxi) \coloneqq \frac{1}{1-\beta} \, \log \sum_{(x,a) \in \sX \times \sA} e^{k_{\btheta,\blambda,\bxi}(x,a)} - \langle \btheta,\langle f \rangle_{\pi_E,\mu_E} \rangle - \sum_{x \in \sX} \blambda_x \, \mu_E(x).
\end{align*}
Note that since $H(\nu) +  \sum_{(x,a) \in \sX \times \sA} k_{\btheta,\blambda,\bxi}(x,a) \, \nu(x,a)$ is linear in $(\btheta,\blambda,\bxi)$, its maximum over $\nu$, which is given by  
$$ \log \sum_{(x,a) \in \sX \times \sA} e^{k_{\btheta,\blambda,\bxi}(x,a)},$$ 
is a convex function of $(\btheta,\blambda,\bxi)$.  Consequently, $g$ is a convex function as the remaining terms in $g$ are linear in $(\btheta,\blambda,\bxi)$. Hence,
the dual formulation of $\mathbf{(OPT_2)}$ is an unconstrained convex optimization problem in the variables $(\btheta,\blambda,\bxi)$. Therefore, it can be solved using a gradient descent algorithm, which we will proceed to describe. Before introducing the algorithm, we first establish the smoothness and strong convexity of $g$. This guarantees convergence of the gradient descent algorithm with an explicit rate under a constant step-size. To obtain the strong convexity of $g$, we need to make the following assumption. 

\begin{assumption}\label{strong-convexity}
$$
\cspan \bigg\{ \bigg(f(x,a,\mu_E),p(\cdot|x,a,\mu_E),e(\cdot|x,a)\bigg) : (x,a) \in \sX \times \sA \bigg\} = \R^k \times \R^{\sX} \times \R^{\sX},
$$
where 
\begin{align*}
e(y|x,a) =
\begin{cases}
1, &  \text{if} \,\, x = y \\
0, &\text{otherwise.}
\end{cases}
\end{align*}
\end{assumption} 

\ns{
\begin{remark}
Assumption~\ref{strong-convexity} is not particularly restrictive, and it is easy to construct an example that satisfies this condition. To keep the discussion simple, let us consider the following scenario. Define the transition probability function as
\[
p(y|x,a,\mu_E) =
\begin{cases}
1, & \text{if} \,\, l(x) = y, \\
0, & \text{otherwise},
\end{cases}
\]
where \( l: \sX \rightarrow \sX \) is a bijective mapping. Additionally, define the function
\[
f_{y,b}(x,a,\mu_E) =
\begin{cases}
1, & \text{if} \,\, h(x,a) = (y,b), \\
0, & \text{otherwise},
\end{cases}
\]
where \( f(x,a,\mu_E) = (f_{y,b}(x,a,\mu_E))_{(y,b) \in \sX \times \sA} \) and \( h: \sX\times\sA \rightarrow \sX\times\sA \) is also bijective. Thus, in this particular example, \( k = |\sX| + |\sA| \). Now, let \( u = (u_1,u_2,u_3) \in \R^k \times \R^{\sX} \times \R^{\sX} \) and consider the inner product of \( (u_1,u_2,u_3) \) with $\big(f(x,a,\mu_E),p(\cdot|x,a,\mu_E),e(\cdot|x,a)\big)$:
\begin{align*}
&\sum_{(y,b) \in \sX \times \sA} f_{y,b}(x,a,\mu_E) \, u_1(y,b) + \sum_{y \in \sX} p(y|x,a,\mu_E) \, u_2(y) + \sum_{y \in \sX} e(y|x,a) \, u_3(y) \\
&\phantom{xxxxxxxxxxxxxxxxxxx}= u_1(h(x,a)) + u_2(l(x)) + u_3(x).
\end{align*}
If this inner product equals zero for all \( (x,a) \in \sX \times \sA \), then, since \( l \) and \( h \) are bijections, the vector \( u \) must be zero. Therefore, Assumption~\ref{strong-convexity} holds for this example.
\end{remark}
}

Now, we can state the following theorem about smoothness and strong convexity of the objective function $g$.

\begin{theorem}
The objective function $g$ is $L$-smooth where 
$$
L \coloneqq 2M \, \left(\frac{M_1}{1-\beta} + 2 \, \sqrt{|\sX| \, |\sA|} \right)
$$ and the constants $M_1$ and $M$ can be determined explicitly. Moreover, under Assumption~\ref{strong-convexity}, for any convex compact subset $D$ in $\R^m$, $g$ is also $\rho(D)$-strongly convex on $D$ for some $\rho(D)$ that depends on $D$. 
\end{theorem}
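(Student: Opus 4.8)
The plan is to reduce both claims to a single explicit computation of the Hessian of $g$, so I would do that first. Writing $u=(\btheta,\blambda,\bxi)\in\R^m$ with $m=k+2|\sX|$, observe that $k_{\btheta,\blambda,\bxi}(x,a)=\log\mu_E(x)+\langle u,\Phi(x,a)\rangle$ is affine in $u$, where
\[
\Phi(x,a)=\bigl(f(x,a,\mu_E),\,(1-\beta)\,e(\cdot|x,a),\,(1-\beta)(p(\cdot|x,a,\mu_E)-\mu_E)\bigr)
\]
collects the coefficients of $\btheta,\blambda,\bxi$. Since $g$ equals $\tfrac{1}{1-\beta}$ times a log-sum-exp of these affine functions plus an affine term, $g$ is $C^\infty$, and a direct calculation gives $\nabla g(u)=\tfrac{1}{1-\beta}\rE_{\nu^{*}_{u}}[\Phi]-b$ and
\[
\nabla^2 g(u)=\frac{1}{1-\beta}\,\Cov_{\nu^{*}_{u}}(\Phi),
\]
where $\nu^{*}_{u}=\nu^{*}_{\btheta,\blambda,\bxi}$ is the Boltzmann distribution from Proposition~\ref{a-prop1} and $b=(\langle f\rangle_{\pi_E,\mu_E},\mu_E,0)$. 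Thus both properties are governed by the covariance of the bounded feature map $\Phi$ under $\nu^{*}_{u}$.

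For $L$-smoothness I would bound the Lipschitz constant of $\nabla g$ uniformly in $u$. Since $\sX,\sA$ are finite the features are bounded; set $M_1:=\max_{(x,a)}\|f(x,a,\mu_E)\|$ and let $M$ bound $\|\Phi(x,a)\|$ over all $(x,a)$. Splitting $\nabla g$ into its $\btheta$-block (feature $\tfrac{1}{1-\beta}f$, producing the factor $\tfrac{M_1}{1-\beta}$ via the $\ell_1$ bound $\|\rE_{\nu^{*}_{u}}[f]-\rE_{\nu^{*}_{u'}}[f]\|\le M_1\|\nu^{*}_{u}-\nu^{*}_{u'}\|_1$) and its $(\blambda,\bxi)$-block (features $e(\cdot|x,a)$ and $p(\cdot|x,a,\mu_E)-\mu_E$, where Cauchy--Schwarz over the $(x,a)$ index contributes the dimensional factor $\sqrt{|\sX||\sA|}$), and invoking the Lipschitzness of the softmax map $u\mapsto\nu^{*}_{u}$ in the form $\|\nu^{*}_{u}-\nu^{*}_{u'}\|_1\le 2M\|u-u'\|$ (which supplies the common factor $2M$), one assembles the stated $L=2M\bigl(\tfrac{M_1}{1-\beta}+2\sqrt{|\sX||\sA|}\bigr)$. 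This step is essentially bookkeeping; the only care needed is to track the $(1-\beta)$-scalings across the three blocks and to keep the norm conversions consistent.

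For strong convexity on a compact convex $D$, I would start from $v^{\top}\nabla^2 g(u)\,v=\tfrac{1}{1-\beta}\Var_{\nu^{*}_{u}}(\langle v,\Phi(\cdot)\rangle)\ge 0$, which vanishes exactly when $\langle v,\Phi(x,a)\rangle$ is constant over the support of $\nu^{*}_{u}$. Assuming $\mu_E(x)>0$ for all $x$, the Boltzmann weights give $\nu^{*}_{u}(x,a)>0$ everywhere, so the support is full. Because covariance is invariant under translating a feature by a constant vector, $\Cov_{\nu^{*}_{u}}(\Phi)$ matches (block by block, up to the $(1-\beta)$ factors) the covariance of $(f,e,p)$, and Assumption~\ref{strong-convexity} — that these span $\R^k\times\R^\sX\times\R^\sX$ — then forces $\langle v,\Phi(\cdot)\rangle\equiv\text{const}$ to imply $v=0$, whence $\nabla^2 g(u)\succ 0$ for each fixed $u$. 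Finally, $u\mapsto\lambda_{\min}(\nabla^2 g(u))$ is continuous and strictly positive, so it attains a positive minimum $\rho(D):=\min_{u\in D}\lambda_{\min}(\nabla^2 g(u))>0$ on the compact set $D$, which is exactly $\rho(D)$-strong convexity there.

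The hard part will be the strong-convexity step, for two reasons. First, I must convert the spanning hypothesis into genuine positive-definiteness of the covariance; this requires handling the \emph{affine} (not linear) nature of the condition $\langle v,\Phi\rangle\equiv\text{const}$ together with the constraint $\sum_y p(y|x,a,\mu_E)=1$ carried by the $p$-block [the direction $\sum_y(p(y|\cdot)-\mu_E(y))=0$ leaves the $\bxi$-coordinates free, so this trivial redundancy must be quotiented out, e.g. by restricting to $\{\bxi:\sum_z\bxi_z=0\}$, before positive-definiteness can hold]. Second, the bound cannot be made global: as $\|u\|\to\infty$ the distribution $\nu^{*}_{u}$ concentrates on a single atom and the variance collapses, so $\lambda_{\min}(\nabla^2 g(u))\to 0$. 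This degeneration is precisely why the statement restricts to a compact $D$ and allows $\rho(D)$ to depend on $D$; the compactness-plus-continuity argument is what upgrades pointwise positivity to a uniform positive modulus.
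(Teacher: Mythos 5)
Your proposal follows essentially the same route as the paper's proof: the Hessian of $g$ is identified with the covariance of the feature map under the Boltzmann distribution $\nu^*_{\btheta,\blambda,\bxi}$, smoothness comes from uniform boundedness of the features together with the Lipschitz continuity of $(\btheta,\blambda,\bxi)\mapsto\nu^*_{\btheta,\blambda,\bxi}$, positive definiteness is argued from Assumption~\ref{strong-convexity} via the vanishing-variance criterion, and the uniform modulus $\rho(D)$ is obtained from continuity of the minimum eigenvalue on the compact set $D$. Your bookkeeping of where the $(1-\beta)$ factors sit differs slightly from the paper's, but that is immaterial since the theorem only asserts that $M$ and $M_1$ are explicit.

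The concern you raise in your final paragraph, however, is not a side issue to be smoothed over: it is a genuine gap, and the paper's own proof has it too. The contradiction argument (yours in paragraph three, and the paper's) passes from ``$\langle\ba,{\boldsymbol \X}\rangle$ is a.s.\ equal to some constant $\alpha$'' to ``the support of $\Law({\boldsymbol \X})$ lies in $\{\bd:\langle\ba,\bd\rangle=\alpha\}$, contradicting the spanning assumption.'' But a set of vectors whose \emph{linear} span is all of $\R^m$ can perfectly well lie in an \emph{affine} hyperplane with $\alpha\neq 0$; the spanning hypothesis only yields a contradiction when $\alpha=0$. Worse, such directions genuinely exist here: since $\sum_y e(y|x,a)=1$ and $\sum_y p(y|x,a,\mu_E)=1$ for every $(x,a)$, the choices $\ba=(0,\mathbf{1},0)$ and $\ba=(0,0,\mathbf{1})$ give $\langle\ba,{\boldsymbol \X}\rangle\equiv 1$, so $\Cov({\boldsymbol \X})$ is singular at every $(\btheta,\blambda,\bxi)$ regardless of Assumption~\ref{strong-convexity}. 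Equivalently, $g(\btheta,\blambda+c\mathbf{1},\bxi)=g(\btheta,\blambda,\bxi+c\mathbf{1})=g(\btheta,\blambda,\bxi)$ for all $c\in\R$ (the first because the shift $+c$ acquired by $\frac{1}{1-\beta}\log\sum e^{k}$ cancels against $-c\sum_x\mu_E(x)$, the second because $\sum_z\bigl(p(z|x,a,\mu_E)-\mu_E(z)\bigr)=0$), so $g$ has two flat directions and cannot be strongly convex on any $D$ containing a segment parallel to them. Your proposed repair is the right one: fix the gauge by restricting $\blambda$ and $\bxi$ to sum-zero subspaces (equivalently, quotient by the flat directions), replace the linear-span hypothesis by the corresponding condition on the span of the differences $\Phi(x,a)-\Phi(x_0,a_0)$, and the rest of the argument---pointwise positive definiteness on the reduced space, then continuity plus compactness---goes through. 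Your second observation, that the minimum eigenvalue degenerates as $\|(\btheta,\blambda,\bxi)\|\to\infty$ because $\nu^*_{\btheta,\blambda,\bxi}$ concentrates on a single atom, correctly explains why no global modulus is available and why the restriction to a compact $D$ is essential.
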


\begin{proof}
Note that the partial gradients of $g$ with respect to the vectors $\btheta,\blambda, \bxi$ are given as follows:
\begin{align*}
\nabla_{\btheta} g(\btheta,\blambda, \bxi) &= \frac{1}{1-\beta} \,  \sum_{(x,a) \in \sX \times \sA} f(x,a,\mu_E) \, \nu^*_{\btheta,\blambda,\bxi}(x,a) - \langle f \rangle_{\pi_E,\mu_E} ,  \\
\nabla_{\bxi} g(\btheta,\blambda, \bxi) &= \sum_{(x,a) \in \sX \times \sA} p(\cdot|x,a,\mu_E) \, \nu^*_{\btheta,\blambda,\bxi}(x,a) -\mu_E(\cdot),  \\
\nabla_{\blambda} g(\btheta,\blambda, \bxi) &= \sum_{(x,a) \in \sX \times \sA} e(\cdot|x,a) \, \nu^*_{\btheta,\blambda,\bxi}(x,a) -\mu_E(\cdot) 
=\nu_{\btheta,\blambda,\bxi}^{*,\sX}(\cdot) - \mu_E(\cdot). 
\end{align*}
Therefore, to establish the Lipschitz continuity of $\nabla g$ (or, equivalently  smoothness of $g$), we need to first prove Lipschitz continuity of $\nu^*_{\btheta,\blambda,\bxi}$ with respect to $(\btheta,\blambda,\bxi)$. The partial gradients of $\nu^*_{\btheta,\blambda,\bxi}(x,a)$ with respect to the vectors $\btheta,\blambda, \bxi$ are given by
\begin{align*}
&\nabla_{\bepsilon} \nu^*_{\btheta,\blambda,\bxi}(x,a) \\
&= \frac{e^{k_{\btheta,\blambda,\bxi}}(x,a) \, Z_{\btheta,\blambda,\bxi} \, \nabla_{\bepsilon} k_{\btheta,\blambda,\bxi}(x,a) - e^{k_{\btheta,\blambda,\bxi}}(x,a) \, \sum_{(y,b) \in \sX \times \sA} e^{k_{\btheta,\blambda,\bxi}}(y,b) \, \nabla_{\bepsilon} k_{\btheta,\blambda,\bxi}(y,b)}{(Z_{\btheta,\blambda,\bxi})^2} \\
&= \nu^*_{\btheta,\blambda,\bxi}(x,a) \, \nabla_{\bepsilon} k_{\btheta,\blambda,\bxi}(x,a) - \nu^*_{\btheta,\blambda,\bxi}(x,a) \, \langle \nabla_{\bepsilon} k_{\btheta,\blambda,\bxi} \rangle_{\nu^*_{\btheta,\blambda,\bxi}},
\end{align*}
where $\bepsilon \in \{\btheta,\blambda,\bxi\}$. Note that we have 
\begin{align*}
\sup_{\substack{(x,a) \in \sX \times \sA \\ \btheta,\blambda,\bxi}} \|\nabla_{\btheta} k_{\btheta,\blambda,\bxi}(x,a)\| &= \sup_{\substack{(x,a) \in \sX \times \sA \\ \btheta,\blambda,\bxi}} \|f(x,a,\mu_E)\| \eqqcolon M_1 < \infty, \\
\sup_{\substack{(x,a) \in \sX \times \sA \\ \btheta,\blambda,\bxi}} \|\nabla_{\blambda} k_{\btheta,\blambda,\bxi}(x,a)\| &= \sup_{\substack{(x,a) \in \sX \times \sA \\ \btheta,\blambda,\bxi}} (1-\beta) \|{\ee}_{x}\| \eqqcolon M_2 < \infty, \\
\sup_{\substack{(x,a) \in \sX \times \sA \\ \btheta,\blambda,\bxi}} \|\nabla_{\bxi} k_{\btheta,\blambda,\bxi}(x,a)\| &= \sup_{\substack{(x,a) \in \sX \times \sA \\ \btheta,\blambda,\bxi}} (1-\beta) \|p(\cdot|x,a,\mu_E)-\mu_E(\cdot)\| \eqqcolon M_3 < \infty ,
\end{align*}
where ${\ee}_{x} \in \R^{|\sX|}$ is the vector whose $x^{th}$ term is 1 and the rest are 0. This implies that for every $\bepsilon \in \{\btheta,\blambda,\bxi\}$ we have
\begin{align*}
\sup_{\substack{(x,a) \in \sX \times \sA \\ \btheta,\blambda,\bxi}}\|\nabla_{\bepsilon} \nu^*_{\btheta,\blambda,\bxi}(x,a)\| \leq 2 \, \max\{M_1,M_2,M_3\} \eqqcolon 2 \, M.
\end{align*}
Hence, by the mean-value theorem, $\nu^*_{\btheta,\blambda,\bxi}(x,a)$ is $2M$-Lipschitz continuous with respect to $(\btheta,\blambda,\bxi)$ for all $(x,a) \in \sX \times \sA$. This implies that for any $(\btheta,\blambda,\bxi)$ and $(\btheta',\blambda',\bxi')$,  we have
\begin{align*}
\|\nabla_{\btheta} g(\btheta,\blambda, \bxi) - \nabla_{\btheta} g(\btheta',\blambda', \bxi') \| &\leq \frac{M_1}{1-\beta} \, 2M \,  \|(\btheta,\blambda, \bxi) -(\btheta',\blambda', \bxi') \|,  \\
\|\nabla_{\bxi} g(\btheta,\blambda, \bxi)-\nabla_{\bxi} g(\btheta',\blambda', \bxi')\| &\leq 2M \, \sqrt{|\sX| \, |\sA|} \,  \|(\btheta,\blambda, \bxi) -(\btheta',\blambda', \bxi') \|,  \\
\|\nabla_{\blambda} g(\btheta,\blambda, \bxi)-\nabla_{\blambda} g(\btheta',\blambda', \bxi') \| &\leq  2M \, \sqrt{|\sX| \, |\sA|} \,  \|(\btheta,\blambda, \bxi) -(\btheta',\blambda', \bxi') \|. 
\end{align*}
Hence $g(\btheta,\blambda,\bxi)$ is $L$-smooth with respect to $(\btheta,\blambda,\bxi)$, where 
$$
L \coloneqq 2M \, \left(\frac{M_1}{1-\beta} + 2 \, \sqrt{|\sX| \, |\sA|} \right).
$$

Now it is time to prove the $\rho$-strong convexity of $g(\btheta,\blambda,\bxi)$ over compact subsets. To this end, let us compute the partial Hessian of this function with respect to variables $\btheta,\blambda,\bxi$, building upon the previously derived results. Denoting the outer (or tensor) product of two vectors by $\otimes$, we have
\begin{align*}
&\nabla_{\btheta,\btheta}^2 g(\btheta,\blambda, \bxi) = \frac{1}{1-\beta} \,  \sum_{(x,a) \in \sX \times \sA} f(x,a,\mu_E) \otimes \nabla_{\btheta}\nu^*_{\btheta,\blambda,\bxi}(x,a) \\
&=\frac{1}{1-\beta} \,  \sum_{(x,a) \in \sX \times \sA} f(x,a,\mu_E) \otimes \left[ 
 \nu^*_{\btheta,\blambda,\bxi}(x,a) \, \nabla_{\btheta} k_{\btheta,\blambda,\bxi}(x,a) - \nu^*_{\btheta,\blambda,\bxi}(x,a) \, \langle \nabla_{\btheta} k_{\btheta,\blambda,\bxi} \rangle_{\nu^*_{\btheta,\blambda,\bxi}} \right] \\
&= \frac{1}{1-\beta} \,  \sum_{(x,a) \in \sX \times \sA} f(x,a,\mu_E) \otimes \left[ 
 \nu^*_{\btheta,\blambda,\bxi}(x,a) \, f(x,a,\mu_E) - \nu^*_{\btheta,\blambda,\bxi}(x,a) \, \langle f(x,a,\mu_E) \rangle_{\nu^*_{\btheta,\blambda,\bxi}} \right] \\
&= \begin{aligned}[t] \frac{1}{1-\beta} \,  \sum_{(x,a) \in \sX \times \sA} f(x,a,\mu_E) \otimes  &f(x,a,\mu_E) \nu^*_{\btheta,\blambda,\bxi}(x,a)\\
& - \left(\sum_{(x,a) \in \sX \times \sA} f(x,a,\mu_E) \nu^*_{\btheta,\blambda,\bxi}(x,a) \right)  \otimes \langle f(x,a,\mu_E)  \rangle_{\nu^*_{\btheta,\blambda,\bxi}} \end{aligned} \\
&= \begin{aligned}[t] \frac{1}{1-\beta} \,  \sum_{(x,a) \in \sX \times \sA} f(x,a,\mu_E) \otimes  &f(x,a,\mu_E) \nu^*_{\btheta,\blambda,\bxi}(x,a) \\
&- \langle f(x,a,\mu_E) \rangle_{\nu^*_{\btheta,\blambda,\bxi}}  \otimes \langle f(x,a,\mu_E)  \rangle_{\nu^*_{\btheta,\blambda,\bxi}}.  \end{aligned}
\end{align*}
Define the random vector $\X_f$ on the discrete probability space $(\sX\times\sA,\nu^*_{\btheta,\blambda,\bxi})$ as 
$$
\X_f(x,a) \coloneqq \frac{1}{1-\beta} \, f(x,a,\mu_E) \in \R^k.
$$
Then, the computations above imply that 
\begin{align*}
\nabla_{\btheta,\btheta}^2 g(\btheta,\blambda, \bxi) = E[ \X_f \otimes \X_f ] - E [\X_f] \otimes E[\X_f] = \Cov(\X_f).
\end{align*}
Similarly, we have 
\begin{align*}
&\nabla_{\bxi,\bxi}^2 g(\btheta,\blambda, \bxi) =  \sum_{(x,a) \in \sX \times \sA} p(\cdot|x,a,\mu_E) \otimes \nabla_{\bxi}\nu^*_{\btheta,\blambda,\bxi}(x,a) \\
&= \sum_{(x,a) \in \sX \times \sA} p(\cdot|x,a,\mu_E) \otimes \left[ 
 \nu^*_{\btheta,\blambda,\bxi}(x,a) \, \nabla_{\bxi} k_{\btheta,\blambda,\bxi}(x,a) - \nu^*_{\btheta,\blambda,\bxi}(x,a) \, \langle \nabla_{\bxi} k_{\btheta,\blambda,\bxi} \rangle_{\nu^*_{\btheta,\blambda,\bxi}} \right] \\
&=  \sum_{(x,a) \in \sX \times \sA} p(\cdot|x,a,\mu_E)  \otimes \left[ 
 \nu^*_{\btheta,\blambda,\bxi}(x,a) \, p(\cdot|x,a,\mu_E)  - \nu^*_{\btheta,\blambda,\bxi}(x,a) \, \langle p(\cdot|x,a,\mu_E)  \rangle_{\nu^*_{\btheta,\blambda,\bxi}} \right] \\
&= \begin{aligned}[t] \sum_{(x,a) \in \sX \times \sA} p(\cdot|x,a,\mu_E)  \otimes  &p(\cdot|x,a,\mu_E) \nu^*_{\btheta,\blambda,\bxi}(x,a) \\
&- \left(\sum_{(x,a) \in \sX \times \sA} p(\cdot|x,a,\mu_E)  \nu^*_{\btheta,\blambda,\bxi}(x,a) \right)  \otimes \langle p(\cdot|x,a,\mu_E)  \rangle_{\nu^*_{\btheta,\blambda,\bxi}} \end{aligned}
\\
&= \begin{aligned}[t] \sum_{(x,a) \in \sX \times \sA} p(\cdot|x,a,\mu_E)  \otimes  &p(\cdot|x,a,\mu_E)  \nu^*_{\btheta,\blambda,\bxi}(x,a) \\
 &- \langle p(\cdot|x,a,\mu_E)  \rangle_{\nu^*_{\btheta,\blambda,\bxi}}  \otimes \langle p(\cdot|x,a,\mu_E)  \rangle_{\nu^*_{\btheta,\blambda,\bxi}}. \end{aligned}
\end{align*}
Defining the random vector $\X_p$ on the discrete probability space $(\sX\times\sA,\nu^*_{\btheta,\blambda,\bxi})$ as
$$
\X_p(x,a) \coloneqq p(\cdot|x,a,\mu_E) \in \R^{\sX},
$$
we obtain 
\begin{align*}
\nabla_{\bxi,\bxi}^2 g(\btheta,\blambda, \bxi) = E[ \X_p \otimes \X_p ] - E[\X_p] \otimes E[\X_p] = \Cov(\X_p).
\end{align*}
Finally, we have 
\begin{align*}
&\nabla_{\blambda,\blambda}^2 g(\btheta,\blambda, \bxi) =  \sum_{(x,a) \in \sX \times \sA} e(\cdot|x,a) \otimes \nabla_{\blambda}\nu^*_{\btheta,\blambda,\bxi}(x,a) \\
&= \sum_{(x,a) \in \sX \times \sA} e(\cdot|x,a) \otimes \left[ 
 \nu^*_{\btheta,\blambda,\bxi}(x,a) \, \nabla_{\blambda} k_{\btheta,\blambda,\bxi}(x,a) - \nu^*_{\btheta,\blambda,\bxi}(x,a) \, \langle \nabla_{\blambda} k_{\btheta,\blambda,\bxi} \rangle_{\nu^*_{\btheta,\blambda,\bxi}} \right] \\
&=  \sum_{(x,a) \in \sX \times \sA} e(\cdot|x,a) \otimes \left[ 
 \nu^*_{\btheta,\blambda,\bxi}(x,a) \, e(\cdot|x,a)  - \nu^*_{\btheta,\blambda,\bxi}(x,a) \, \langle e(\cdot|x,a)  \rangle_{\nu^*_{\btheta,\blambda,\bxi}} \right] \\
&= \begin{aligned}[t] \sum_{(x,a) \in \sX \times \sA} e(\cdot|x,a)  \otimes  &e(\cdot|x,a) \nu^*_{\btheta,\blambda,\bxi}(x,a) \\
&- \left(\sum_{(x,a) \in \sX \times \sA} e(\cdot|x,a) \nu^*_{\btheta,\blambda,\bxi}(x,a) \right)  \otimes \langle e(\cdot|x,a)  \rangle_{\nu^*_{\btheta,\blambda,\bxi}} \end{aligned} \\
&= \sum_{(x,a) \in \sX \times \sA} e(\cdot|x,a)  \otimes  e(\cdot|x,a) \nu^*_{\btheta,\blambda,\bxi}(x,a) - \langle e(\cdot|x,a) \rangle_{\nu^*_{\btheta,\blambda,\bxi}}  \otimes \langle e(\cdot|x,a) \rangle_{\nu^*_{\btheta,\blambda,\bxi}}.
\end{align*}
Defining another random vector $\X_e$ on the discrete probability space $(\sX\times\sA,\nu^*_{\btheta,\blambda,\bxi})$ as 
$$
\X_e(x,a) \coloneq e(\cdot|x,a) \in \R^{\sX},
$$
we similarly obtain
\begin{align*}
\nabla_{\blambda,\blambda}^2 g(\btheta,\blambda, \bxi) = E[ \X_e \otimes \X_e ] - E[\X_e] \otimes E[\X_e] = \Cov(\X_e).
\end{align*}
Note that one can also compute the cross terms similarly and obtain 
\begin{align*}
\nabla_{\bxi,\btheta}^2 g(\btheta,\blambda, \bxi) &= E[ \X_f \otimes \X_p ] - E[\X_f] \otimes E[\X_p], \\
\nabla_{\blambda,\btheta}^2 g(\btheta,\blambda, \bxi) &= E[ \X_f \otimes \X_e ] - E[\X_f] \otimes E[\X_e], \\
\nabla_{\bxi,\blambda}^2 g(\btheta,\blambda, \bxi) &= E[ \X_p \otimes \X_e ] - E[\X_p] \otimes E[\X_e].
\end{align*}
Hence, if we define the random vector ${\boldsymbol \X} \coloneqq (\X_f,\X_p,\X_e)$, then the Hessian of $g(\btheta,\blambda,\bxi)$ can be written as
$$
\Hes(g)(\btheta,\blambda,\bxi) = \Cov({\boldsymbol \X}).
$$
Clearly, $\Cov({\boldsymbol \X})$ is dependent on the parameters $(\btheta, \blambda, \bxi)$. Moreover, each element of $\Cov({\boldsymbol \X})$ represents an expectation of a random variable with respect to the Boltzmann distribution $\nu^*_{\btheta, \blambda, \bxi}$. Since $\nu^{*}_{\btheta, \blambda, \bxi}(x, a)$ has been demonstrated to be $2M$-Lipschitz continuous with respect to $(\btheta, \blambda, \bxi)$ for all $(x, a) \in \sX \times \sA$, it is evident that $\Cov({\boldsymbol \X})$ is continuous in terms of $(\btheta, \blambda, \bxi)$.

Furthermore, as covariance matrices are inherently symmetric and positive semi-definite, it follows that  $\Cov({\boldsymbol \X})$ is positive semi-definite for any given $(\btheta, \blambda, \bxi)$. However, to establish its positive definiteness, Assumption~\ref{strong-convexity} is required.

Under this assumption, suppose in contrary that, $\Cov({\boldsymbol \X})$ is not positive definite. Then, there exists a vector $\ba \in \R^k \times \R^{\sX} \times \R^{\sX} \eqqcolon \R^m$ such that $\langle \ba, \Cov({\boldsymbol \X}) \, \ba \rangle = 0$; that is, if ${\boldsymbol \X} = (\X_i)_{i=1}^m$, then
\begin{align*}
0 = \sum_{i,j=1}^m a_j \Cov(\X_j,\X_i) a_i = \Var\left(\sum_{i=1}^m a_i \, \X_i \right).
\end{align*}
This implies that the random variable $\sum_{i=1}^m a_i \, \X_i$ is almost surely deterministic, concentrated at a point $\alpha \in \R$. This means that the support of the distribution of the random vector ${\boldsymbol \X}$ is a subset of the hyperplane $\{\bd: \langle \ba,\bd \rangle = \alpha\}$; that is, 
$$\supp\left\{\Law({\boldsymbol \X})\right\} \subset \{\bd: \langle \ba,\bd \rangle = \alpha\}.$$
However, since ${\boldsymbol \X}$ is defined as the image of the vector-valued function 
$$\big(f(x,a,\mu_E),p(\cdot|x,a,\mu_E),e(\cdot|x)\big)$$ from $\sX \times \sA$ to $\R^m$, and as the probabilities of all image vectors are positive (since they are derived from the push-forward of the Boltzmann distribution $\nu^*_{\btheta, \blambda, \bxi}$), we must have 
$$\supp\left\{\Law({\boldsymbol \X})\right\} \not \subset \{\bd: \langle \ba,\bd \rangle = \alpha\}
$$
as $\cspan \big\{ \big(f(x,a,\mu_E),p(\cdot|x,a,\mu_E),e(\cdot|x)\big) : (x,a) \in \sX \times \sA \big\} = \R^m $ by Assumption~\ref{strong-convexity}, 
which contradicts with the above conclusion. Hence, $\Cov({\boldsymbol \X})$ is positive definite. Let $\lambda_{\min}({\boldsymbol \X})$ be the minimum eigenvalue of $\Cov({\boldsymbol \X})$, and the positive definiteness of $\Cov({\boldsymbol \X})$ ensures that $\lambda_{\min}({\boldsymbol \X}) > 0$. Since $\Cov({\boldsymbol \X})$ varies continuously with respect to $(\btheta, \blambda, \bxi)$, the minimum eigenvalue $\lambda_{\min}({\boldsymbol \X})$ also changes continuously concerning $(\btheta, \blambda, \bxi)$. This implies that if $D \subset \R^m$ is a compact subset, then
$$
\min_{(\btheta,\blambda,\bxi) \in D} \lambda_{\min}({\boldsymbol \X}) \eqqcolon  \lambda_{\min}(D) >0
$$
by uniform continuity. This means that $\Hes(g) = \Cov({\boldsymbol \X}) \succcurlyeq \lambda_{\min}(D)  \Id$ for all $(\btheta,\blambda,\bxi) \in D$, and so, $g$ is $\rho(D)$-strongly convex on $D$, where $\rho(D) \coloneqq \lambda_{\min}(D)$.  
\end{proof}

We may now introduce the gradient descent algorithm for finding the minimizer of $g$.
 
\begin{algorithm}[H]
	\caption{Gradient Descent}
	\label{GD}
	\begin{algorithmic}
		\STATE{Inputs $\left(\btheta_0,\blambda_0,\bxi_0\right), \gamma >0$}
		\STATE{Start with $(\btheta_0,\blambda_0,\bxi_0)$}
		\FOR{$k=0,\ldots,K-1$} 
        \STATE{  
			\begin{equation*}
			(\btheta_{k+1},\blambda_{k+1},\bxi_{k+1}) = (\btheta_{k},\blambda_{k},\bxi_{k}) - \gamma \, \nabla g(\btheta_{k},\blambda_{k},\bxi_{k})
			\end{equation*}
			}
	    \ENDFOR
		\RETURN{$(\btheta_{K},\blambda_{K},\bxi_{K})$ and $\nu^*_{\btheta_{K},\blambda_{K},\bxi_{K}}$}
	\end{algorithmic}
\end{algorithm}


\tk{
\begin{remark}
In gradient descent for convex functions, \( L \)-smoothness, characterized by the Lipschitz continuity of the gradient, guarantees convergence to a minimum when an appropriate learning rate \( \gamma \) is selected (i.e., \( \gamma \leq \frac{1}{L} \)). A learning rate that is too large can cause divergence or oscillations, while one that is too small can hinder convergence. The optimal learning rate typically falls within the range \( (0, \frac{1}{L}] \). Unlike adaptive learning rates, which vary based on the optimization process, a constant learning rate remains unchanged throughout this algorithm. This stability simplifies the tuning process, making implementation easier in practice.
\end{remark}
}

\begin{theorem}
Suppose that the step-size in gradient descent algorithm satisfies  $0<\gamma \leq \frac{1}{L}$. Then, we have two results of increasing strength, which depend on whether Assumption~\ref{strong-convexity} is imposed or not:
\begin{itemize}
\item [(a)] For any $k$, we have
$$
g(\btheta_{k},\blambda_{k},\bxi_{k}) - \min_{\btheta \in \R^k, \,  \blambda, \bxi \in \R^{|\sX|}} g(\btheta,\blambda,\bxi) \leq \frac{\|(\btheta_{0},\blambda_{0},\bxi_{0})-(\btheta_{*},\blambda_{*},\bxi_{*})\|^2}{2 \gamma k},
$$
where $(\btheta_{*},\blambda_{*},\bxi_{*}) \coloneqq \argmin_{\btheta \in \R^k, \,  \blambda, \bxi \in \R^{|\sX|}} g(\btheta,\blambda,\bxi)$. Moreover, 
$$
\lim_{k \rightarrow \infty} \|(\btheta_{k},\blambda_{k},\bxi_{k})-(\btheta_{*},\blambda_{*},\bxi_{*})\| = 0
$$
if we run the algorithm indefinitely. Therefore, since $\nu^*_{\btheta,\blambda,\bxi}(x,a)$ is $2M$-Lipschitz continuous with respect to $(\btheta,\blambda,\bxi)$ for all $(x,a) \in \sX \times \sA$, we also have 
$$
\lim_{k \rightarrow \infty} \|\nu^*_{\btheta_k,\blambda_k,\bxi_k}-\nu^*_{\btheta_*,\blambda_*,\bxi_*}\| = 0.
$$
\item [(b)] Suppose that Assumption~\ref{strong-convexity} holds. Define the following compact subset of $\R^m$:
$$
D \coloneqq\bigg \{\bd \in \R^m: \|\bd - (\btheta_{*},\blambda_{*},\bxi_{*})\| \leq \|(\btheta_{0},\blambda_{0},\bxi_{0})-(\btheta_{*},\blambda_{*},\bxi_{*}) \| \bigg\}.
$$
Then, for any $k$, we have
$$
\|(\btheta_{k},\blambda_{k},\bxi_{k})-(\btheta_{*},\blambda_{*},\bxi_{*})\| \leq (1-\gamma \, \rho(D))^k \, \|(\btheta_{0},\blambda_{0},\bxi_{0})-(\btheta_{*},\blambda_{*},\bxi_{*}) \|.
$$
Therefore, since $\nu^*_{\btheta,\blambda,\bxi}(x,a)$ is $2M$-Lipschitz continuous with respect to $(\btheta,\blambda,\bxi)$ for all $(x,a) \in \sX \times \sA$, we also have 
$$
\|\nu^*_{\btheta_k,\blambda_k,\bxi_k}-\nu^*_{\btheta_*,\blambda_*,\bxi_*}\| \leq \sqrt{|\sX| |\sA|} \, 2M \, (1-\gamma \, \rho(D))^k \, \|(\btheta_{0},\blambda_{0},\bxi_{0})-(\btheta_{*},\blambda_{*},\bxi_{*}) \|.
$$
\end{itemize}
\end{theorem}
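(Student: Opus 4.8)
The plan is to treat both parts as instances of the standard convergence theory for gradient descent, invoking only the structural facts already established: that $g$ is convex and $L$-smooth, that it is $\rho(D)$-strongly convex on any convex compact $D$ under Assumption~\ref{strong-convexity}, and that each $\nu^*_{\btheta,\blambda,\bxi}(x,a)$ is $2M$-Lipschitz in $(\btheta,\blambda,\bxi)$. Throughout I abbreviate $u_k \coloneqq (\btheta_k,\blambda_k,\bxi_k)$ and $u_* \coloneqq (\btheta_*,\blambda_*,\bxi_*)$, and write the update as $u_{k+1} = T(u_k)$ with $T(u) \coloneqq u - \gamma\,\nabla g(u)$; since $\nabla g(u_*) = 0$, the minimizer $u_*$ is a fixed point of $T$. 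The common engine is the descent lemma implied by $L$-smoothness: expanding $g(u_{k+1}) \le g(u_k) + \langle\nabla g(u_k), u_{k+1}-u_k\rangle + \tfrac{L}{2}\|u_{k+1}-u_k\|^2$ and substituting the update yields, for $\gamma \le 1/L$, the sufficient-decrease bound $g(u_{k+1}) \le g(u_k) - \tfrac{\gamma}{2}\|\nabla g(u_k)\|^2$.

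For part (a), I would combine this decrease with the convexity inequality $g(u_k) - g(u_*) \le \langle\nabla g(u_k), u_k - u_*\rangle$ and the identity $\|u_{k+1}-u_*\|^2 = \|u_k-u_*\|^2 - 2\gamma\langle\nabla g(u_k), u_k-u_*\rangle + \gamma^2\|\nabla g(u_k)\|^2$; telescoping the resulting recursion in the usual way gives the $O(1/k)$ rate $g(u_k) - g(u_*) \le \|u_0-u_*\|^2/(2\gamma k)$. For the iterate convergence I would use the co-coercivity estimate $g(u_k)-g(u_*) \ge \tfrac{1}{2L}\|\nabla g(u_k)\|^2$, valid for convex $L$-smooth functions, which turns the distance identity into $\|u_{k+1}-u_*\|^2 \le \|u_k-u_*\|^2 - \gamma(\tfrac1L-\gamma)\|\nabla g(u_k)\|^2$, so for $\gamma\le1/L$ the sequence $(\|u_k-u_*\|)$ is nonincreasing (Fej\'er monotonicity). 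Hence $(u_k)$ is bounded; since $g(u_k)\to g(u_*)$, every cluster point is a minimizer, and Fej\'er monotonicity with respect to such a cluster point forces $u_k \to u_*$. Applying the $2M$-Lipschitz bound coordinatewise and summing squares over the $|\sX||\sA|$ pairs then gives $\|\nu^*_{u_k}-\nu^*_{u_*}\| \le \sqrt{|\sX||\sA|}\,2M\,\|u_k-u_*\| \to 0$.

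For part (b) the extra ingredient is strong convexity, but only on $D$, so the heart of the argument is to keep the iterates inside $D$. Since $u_*$ is the center of $D$ and $\|u_0-u_*\|$ its radius, I would argue by induction: assuming $u_k\in D$, the segment $[u_*,u_k]\subset D$ (as $D$ is convex), so writing $\nabla g(u_k)-\nabla g(u_*) = A_k (u_k-u_*)$ with $A_k = \int_0^1 \Hes(g)\big(u_*+t(u_k-u_*)\big)\,dt$ and using $\rho(D)\,\Id \preccurlyeq \Hes(g) \preccurlyeq L\,\Id$ on $D$, one obtains $u_{k+1}-u_* = (\Id-\gamma A_k)(u_k-u_*)$ with $\|\Id-\gamma A_k\| \le 1-\gamma\rho(D)$ whenever $\gamma\le1/L$. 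This yields $\|u_{k+1}-u_*\| \le (1-\gamma\rho(D))\|u_k-u_*\| \le \|u_0-u_*\|$, so $u_{k+1}\in D$ and the induction closes, simultaneously delivering the geometric rate $\|u_k-u_*\| \le (1-\gamma\rho(D))^k\|u_0-u_*\|$. The occupation-measure bound then follows exactly as in part (a), now carrying the geometric factor.

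The main obstacle is precisely this localization in part (b): the strong convexity guaranteed by Assumption~\ref{strong-convexity} is not global but holds only on the compact set $D$, so the classical linear-rate estimate for strongly convex smooth functions cannot be quoted off the shelf. The resolution is the self-consistent induction above, in which the contraction factor $1-\gamma\rho(D)<1$ is exactly what prevents the iterates from ever leaving $D$; the upper Hessian bound $L\,\Id$ needed for $\|\Id-\gamma A_k\|\le 1-\gamma\rho(D)$ is available globally from $L$-smoothness, while the lower bound $\rho(D)\,\Id$ is required only along segments that the induction certifies to lie in $D$.
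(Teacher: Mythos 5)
Your proposal is correct and follows essentially the same route as the paper: the paper simply invokes Theorems 3.4 and 3.6 of Garrigos--Gower for the sublinear and linear rates, and handles the crucial localization in part (b) by extracting the monotone-distance property $\|u_k-u_*\|\le\|u_0-u_*\|$ from the proof of the cited theorem so that the iterates stay in $D$ where $\rho(D)$-strong convexity applies. You unpack those citations from first principles (descent lemma, Fej\'er monotonicity, and an integrated-Hessian contraction for part (b)), but the underlying mechanism --- in particular the self-consistent confinement of the iterates to $D$ --- is the same.
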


\begin{proof}
Since $g$ is convex and $L$-smooth, the part (a) follows from \citet[Theorem 3.4]{GaGo23}. In the case of part (b), by examining the proof of \citet[Theorem 3.4]{GaGo23}, it becomes evident that for any $k$,
$$
\|(\btheta_{k},\blambda_{k},\bxi_{k})-(\btheta_{*},\blambda_{*},\bxi_{*})\| \leq \|(\btheta_{0},\blambda_{0},\bxi_{0})-(\btheta_{*},\blambda_{*},\bxi_{*}) \|.
$$
Hence $(\btheta_{k},\blambda_{k},\bxi_{k}) \in D$ for any $k$. Then, part (b) follows from \citet[Theorem 3.6]{GaGo23} and the fact that $g$ is $\rho(D)$-strongly convex on $D$. 
\end{proof}

Let us elaborate on the connection between the optimizer of $g$ and the policy that resolves the maximum entropy IRL problem. If $(\btheta^*,\blambda^*,\bxi^*)$ is the minimizer of $g$, which can be computed via above gradient descent algorithm, then by Proposition~\ref{a-prop1}, the optimal solution of $\mathbf{(OPT_2)}$ is 
$$
\nu^*_{\btheta^*,\blambda^*,\bxi^*}(x,a) \tk{=} \frac{e^{k_{\btheta^*,\blambda^*,\bxi^*}(x,a) }}{\sum_{(x,a) \in \sX \times \sA} e^{k_{\btheta^*,\blambda^*,\bxi^*}(x,a) }}.
$$
Then, in view of the proof of Theorem~\ref{equiv_thm}, the policy 
$$\pi_{\nu^*_{\btheta^*,\blambda^*,\bxi^*}}(a|x) = \frac{\nu^*_{\btheta^*,\blambda^*,\bxi^*}(x,a)}{\nu_{\btheta^*,\blambda^*,\bxi^*}^{*,\sX}(x)}$$
solves the maximum causal entropy problem $\mathbf{(OPT_1)}$.

\section{Mean-Field Game as \tk{a} GNEP}\label{mfg-gnep}

\ns{
}

\tk{This section diverges from the preceding one by shifting focus from IRL to forward RL in the context of MFGs. Here, we reformulate the MFG problem as a generalized Nash equilibrium problem (GNEP). We derive the MFE by leveraging established algorithms for GNEPs \citep[see][]{FaKa10,AxFaKaSa11}. This alternative solution framework complements our prior analysis of the IRL problem for MFGs, providing a comprehensive perspective on MFG solution techniques.
In the subsequent discussion, we provide a detailed rationale for our GNEP formulation, underscoring its key advantages and its broader applicability to a range of problems that may not be readily addressed by conventional MFE computation techniques \citep[see][]{LaPePeGiMuElGePi24}.}

\tk{The standard approach to compute the MFE in MFGs involves iteratively solving an optimal control problem, which reduces to finding an optimal stationary policy for an MDP parameterized by a given mean-field term $\mu \in \Pnew(\sX)$. This can be achieved using established algorithms like value iteration or $Q$-iteration. Alternatively, a policy iteration-like approach can be used, deriving an improved policy greedily from the \(Q\)-function induced by the previous policy under \(\mu\), without affecting overall convergence properties.
}

\tk{Once the optimal or improved policy \(\pi\) is determined, the invariant distribution \(\mu^{\inv}\) associated with the transition probability \(p(\,\cdot\,|x,\pi,\mu)\) can be computed. The new mean-field term is then defined as a convex combination of the current mean-field term and this invariant distribution, defining an operator \(H\) mapping \(\Pnew(\sX)\) to \(\Pnew(\sX)\) as
\[
\mu^+ = H(\mu) = (1-\alpha) \mu + \alpha \mu^{\inv}.
\]
Convergence to an MFE relies on showing \(H\) is a contraction mapping ensuring that iterates converge to a unique fixed point \(\mu^* = H(\mu^*)\). Consequently, $\mu^*$ and the corresponding policy $\pi^*$ constitute an MFE, as applied in various studies \citep[see][Section 3.4]{LaPePeGiMuElGePi24}.}

\tk{
Establishing the contraction of $H$ requires restrictive conditions on the MFG model, such as Lipschitz continuity, strong convexity, and smoothness, with additional constraints on the relevant constants. While some studies establish convergence without a contraction condition, these typically apply to continuous-time settings or specifically structured MFGs such as potential or linear-quadratic MFGs \citep[see][Section 3.4]{LaPePeGiMuElGePi24}. The GNEP approach offers a distinct advantage over standard methods by imposing more relaxed and readily verifiable conditions on the game structure. Notably, it circumvents the need for stringent conditions, such as Lipschitz continuity of system components or the contraction property of $H$, that are often required by conventional techniques.
}

\tk{
Our GNEP-based approach, while requiring twice continuous differentiability and convexity conditions on the model components, provides a new methodology for computing MFEs. While convergence relies on the invertibility of a specific Jacobian, this can be addressed practically by employing the Moore-Penrose pseudoinverse and monitoring convergence behavior. Although initialization sensitivity arises (similar to neural networks), this framework allows leveraging a wide range of existing GNEP solution algorithms \citep[see][]{FaKa10}, significantly expanding the tools available for MFE computation. 
}

\tk{Extending prior work on linear MFGs by the third author \citep[see][]{Sal23}, this paper proposes a novel approach for computing the MFE in classical non-linear MFGs. We take advantage of the flexibility of our GNEP formulation, adapting an interior-point method from \citet{AxFaKaSa11} originally designed for solving the Karush-Kuhn-Tucker (KKT) conditions of GNEPs. 
Our main approach is to formulate the MDP associated with a given mean-field term, $\mu \in \Pnew(\sX)$, as an LP problem utilizing occupation measures. This is a well-established technique in stochastic control, and we refer the reader to \citet{HeGo00, HeLa96} for an in-depth discussion of the LP formulation of MDPs. Incorporating the mean-field consistency condition directly into this LP leads to a natural representation of the MFG problem as a GNEP.}

\subsection{GNEP Formulation}\label{sub1sec3}



For a finite set $\sE$, let $\M(\sE)$ denote the set of finite signed measures on $\sE$ and $\F(\sE)$ denote the set of real functions on $\sE$ (i.e., $\M(\sE) = \F(\sE) = \R^{\sE}$). We define bilinear forms on $\bigl(\M(\sX\times\sA),\F(\sX\times\sA)\bigr)$ and on $\bigl(\M(\sX),\F(\sX)\bigr)$ as inner products
\begin{align*}
\langle \nu,v  \rangle &\coloneqq \sum_{(x,a) \in \sX\times\sA} v(x,a) \, \nu(x,a),  \\ 
\langle \mu,u  \rangle &\coloneqq \sum_{x \in \sX} u(x) \, \mu(x), 
\end{align*}
where $\nu \in \M(\sX\times\sA)$, $v \in \F(\sX\times\sA)$, $\mu \in \M(\sX)$, and $u \in \F(\sX)$.  We define the linear map $\T_{\mu}: \M(\sX\times\sA) \rightarrow \M(\sX)$ by
\begin{align}
{\T}_{\mu}\nu(\,\cdot\,) &= \nu^{\sX}(\,\cdot\,) - \beta \sum_{(x,a) \in \sX\times\sA} p_{\mu}(\,\cdot\,|x,a) \, \nu(x,a) =: \nu^{\sX} - \beta \, \nu \, p_{\mu}, \nonumber 
\end{align}
which depends on $\mu$. Recall that, for a given $\mu$, the corresponding MDP, denoted as $\text{MDP}_{\mu}$, has the following components
\begin{align*}
\left\{\sX,\sA,r_{\mu},p_{\mu},\mu\right\},
\end{align*}
where 
\begin{align*}
r_{\mu}(x,a) := r(x,a,\mu), \,\,
p_{\mu}(\,\cdot\,|\,x,a) := p(\,\cdot\,|\,x,a,\mu), \,\, x(0) \sim \mu.
\end{align*}
Then, the optimal control problem associated to $\text{MDP}_{\mu}$ is equivalent to the following equality constrained linear program \citep[see][Lemma 3.3 and Section 4]{HeGo00}:
\begin{align*}
\text{                         }&\text{maximize}_{\nu\in \M_+(\sX\times\sA)} \text{ } \langle \nu,r_{\mu} \rangle
 \\*
&\text{subject to  } {\T}_{\mu}(\nu) = (1-\beta)\mu , 
\end{align*}
where $\M_+(\sE)$ denotes the set of positive measures on the finite set $\sE$. Here, the feasible points $\nu$ of above linear program indeed corresponds to the set of (normalized) state-action occupation measures of policies. Using this LP formulation, we first establish the following result.

\begin{lemma}\label{com-lemma1}
Let $(\nu^*,\mu^*) \in \M(\sX\times\sA)_+\times\M(\sX)_+$ be a pair with the following properties:
\begin{itemize}
\item[(a)] $\nu^*$ is the optimal solution to the above LP formulation of $\text{MDP}_{\mu^*}$.
\item[(b)] $\mu^*$ satisfies the following equation
$$
\mu^*(\,\cdot\,) = \sum_{(x,a) \in \sX\times\sA} p_{\mu^*}(\,\cdot\,|x,a) \, \nu^*(x,a).
$$
\end{itemize} 
Disintegrating $\nu^*$ as $\nu^*(x,a) = \pi^*(a|x) \, \nu^{*,\sX}(x)$, the pair $(\pi^*,\mu^*)$ is an MFE for the related MFG.
\end{lemma}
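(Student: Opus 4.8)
The plan is to verify directly the two defining conditions of a mean-field equilibrium for $(\pi^*,\mu^*)$: the optimality condition $\pi^* \in \Psi(\mu^*)$ and the invariance condition $\mu^* \in \Lambda(\pi^*)$. These are exactly the two clauses in the definition of an MFE, so establishing both finishes the proof.

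First I would extract the key identity $\nu^{*,\sX} = \mu^*$. The feasibility constraint in (a) reads $\T_{\mu^*}(\nu^*) = (1-\beta)\mu^*$, that is,
$$
\nu^{*,\sX}(\,\cdot\,) - \beta \sum_{(x,a) \in \sX\times\sA} p_{\mu^*}(\,\cdot\,|x,a)\, \nu^*(x,a) = (1-\beta)\,\mu^*(\,\cdot\,).
$$
Substituting property (b), which states precisely that $\sum_{(x,a) \in \sX\times\sA} p_{\mu^*}(\,\cdot\,|x,a)\,\nu^*(x,a) = \mu^*(\,\cdot\,)$, collapses the left-hand side to $\nu^{*,\sX} - \beta\mu^*$, whence $\nu^{*,\sX} = (1-\beta)\mu^* + \beta\mu^* = \mu^*$. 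Although this is a one-line substitution, it is the linchpin of the argument, as it turns the LP consistency condition into the form needed for invariance.

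Next, for $\pi^* \in \Psi(\mu^*)$ I would invoke the standard LP characterization of discounted MDPs (\citet[Lemma 3.3 and Section 4]{HeGo00}): the feasible points of the LP associated with $\text{MDP}_{\mu^*}$ are exactly the normalized occupation measures of policies run from initial law $\mu^*$, and the optimal $\nu^*$ disintegrates into a policy attaining the optimum of the control problem. Since $\langle \nu,r_{\mu^*}\rangle$ equals $(1-\beta)\,J_{\mu^*}(\pi,\mu^*)$ for the occupation measure $\nu$ of any policy $\pi$ (with $r_{\mu^*}(x,a)=r(x,a,\mu^*)$ and $x(0)\sim\mu^*$), maximizing the LP objective is equivalent to maximizing $J_{\mu^*}(\,\cdot\,,\mu^*)$. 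Hence $\pi^*$ achieves $\sup_{\pi} J_{\mu^*}(\pi,\mu^*)$, i.e. $\pi^* \in \Psi(\mu^*)$. Finally, for $\mu^* \in \Lambda(\pi^*)$ I would feed the disintegration $\nu^*(x,a) = \pi^*(a|x)\,\nu^{*,\sX}(x)$ together with $\nu^{*,\sX} = \mu^*$ into property (b), obtaining
$$
\mu^*(\,\cdot\,) = \sum_{(x,a) \in \sX\times\sA} p(\,\cdot\,|x,a,\mu^*)\,\pi^*(a|x)\,\mu^*(x),
$$
which is verbatim the fixed-point equation defining $\Lambda(\pi^*)$. The two conditions together give that $(\pi^*,\mu^*)$ is an MFE by definition.

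I expect the only genuine subtlety to lie in the LP--MDP correspondence of the optimality step and in the disintegration on states where $\nu^{*,\sX}(x) = \mu^*(x) = 0$: on such states $\pi^*$ is defined arbitrarily, but this affects neither the invariance equation (each summand carries the factor $\mu^*(x)=0$) nor the optimality conclusion (these states are never visited from the initial law $\mu^*$), so neither equilibrium condition is compromised. Everything else is routine algebraic substitution.
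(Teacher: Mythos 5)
Your proposal is correct and follows the same overall strategy as the paper: verify the two defining conditions $\pi^*\in\Psi(\mu^*)$ and $\mu^*\in\Lambda(\pi^*)$, with the identity $\nu^{*,\sX}=\mu^*$ as the linchpin. Where you differ is in how that identity is obtained: you substitute property (b) directly into the LP feasibility constraint $\T_{\mu^*}(\nu^*)=(1-\beta)\mu^*$ and read off $\nu^{*,\sX}=\mu^*$ in one line, whereas the paper re-derives the Bellman flow identity from the explicit series representation $\nu^*(x,a)=(1-\beta)\sum_t\beta^t\,\sPr^{\pi^*,\mu^*}[(x(t),a(t))=(x,a)]$ via a telescoping computation. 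Your shortcut is legitimate and strictly cleaner, since the flow equation is already built into the LP's constraint set. The one step you omit that the paper includes is the preliminary verification that $\nu^*$ and $\mu^*$ are in fact probability measures: the lemma only places the pair in the cone $\M(\sX\times\sA)_+\times\M(\sX)_+$, while the definition of an MFE requires $\mu^*\in\Pnew(\sX)$, and your appeal to the LP--occupation-measure correspondence (``occupation measures of policies run from initial law $\mu^*$'') tacitly presupposes that $\mu^*$ is already a probability law. You should add a short mass-balance argument (summing the LP constraint and condition (b) over all states) to pin down the normalization before invoking \citet{HeGo00}; this is a minor repair rather than a flaw in the main line, and your handling of the zero-mass states in the disintegration is a point the paper leaves implicit.
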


\ns{
\begin{remark}
In part (b) of the \tk{above} lemma, if we had the expression
$$
\mu^*(\,\cdot\,) = \sum_{(x,a) \in \sX\times\sA} p_{\mu^*}(\,\cdot\,|x,a) \,\pi^*(a|x)  \, \mu^*(x),
$$
where $\nu^*(x,a) = \pi^*(a|x) \, \mu^*(x)$, the conclusion would be straightforward. However, in the given statement, $\pi^*(a|x) \, \mu^*(x)$ is replaced by $\nu^*(x,a)$, allowing us to avoid an additional non-linearity introduced by the product of $p_{\mu^*}(\,\cdot\,|x,a)$ and $\mu^*(\cdot)$, since both terms depend on $\mu^*$. As we will see, this expression forms a constraint for the second player in the GNEP formulation. Therefore, avoiding unnecessary non-linearities is crucial, and this is the significance of this result. 
\end{remark}
}

\begin{proof}
The proof is analogous to the linear MFG case \citep[see][Lemma 5.1]{Sal23}. For the sake of completeness, we provide a detailed proof.

Note that $\mu^*$ and $\nu^*$ are initially not assumed to be probability measures, and so, we establish this first. Since 
\begin{align}\label{first-identity}
\nu^{*,\sX} = (1-\beta) \, \mu^* + \beta \, \nu^* \, p_{\mu^*},
\end{align}
we have $\nu^*(\sX\times\sA) = (1-\beta) \, \mu^*(\sX) + \beta \, \nu^*(\sX\times\sA) \, \mu^*(\sX)$. Similarly, since 
$$
\mu^* = \nu^* \, p_{\mu^*},
$$
we have $\mu^*(\sX) = \nu^*(\sX\times\sA) \, \mu^*(\sX)$, which implies that $\nu^*$ is a probability measure. In view of this and using (\ref{first-identity}), we obtain the following:
$$
1 = (1-\beta) \, \mu^*(\sX) + \beta \, \mu^*(\sX) = \mu^*(\sX).
$$ 
Therefore, $\mu^*$ is also a probability measure. 

Note that $\nu^*$ is the optimal occupation measure of the LP formulation of 
$\text{MDP}_{\mu^*}$, and so, $\pi^*$ is the optimal policy. Hence, $\pi^* \in \Psi(\mu^*)$. Furthermore, since 
\begin{align}
\nu^{*}(x,a) &:= (1-\beta) \sum_{t=0}^{\infty} \beta^t \, E^{\pi^*,\mu^*} \left[ 1_{\{(x(t),a(t)) = (x,a)\}} \right] \nonumber \\
&= (1-\beta) \sum_{t=0}^{\infty} \beta^t \, \sPr^{\pi^*,\mu^*} \biggl[ (x(t),a(t)) = (x,a) \biggr],  \nonumber
\end{align}
we have 
\small
\begin{align*}
\nu^* \, p_{\mu^*}(\,\cdot\,) &= \sum_{(x,a) \in \sX\times\sA} p_{\mu^*}(\,\cdot\,|x,a) \, \nu^*(x,a) \\
&=  \sum_{(x,a) \in \sX\times\sA} p_{\mu^*}(\,\cdot\,|x,a) \, \bigg\{ (1-\beta) \sum_{t=0}^{\infty} \beta^t \, \sPr^{\pi^*,\mu^*} \biggl[ (x(t),a(t)) = (x,a) \biggr] \bigg\} \\
&= (1-\beta) \sum_{t=0}^{\infty} \beta^t \bigg\{ \sum_{(x,a) \in \sX\times\sA} p_{\mu^*}(\,\cdot\,|x,a) \, \sPr^{\pi^*,\mu^*} \biggl[ (x(t),a(t)) = (x,a) \biggr] \bigg\} \\
&= (1-\beta) \sum_{t=0}^{\infty} \beta^t \, \sPr^{\pi^*,\mu^*} \biggl[ x(t+1) \in \,\cdot\, \biggr] \\
&= \frac{1-\beta}{\beta} \, \sum_{t=1}^{\infty} \beta^t \, \sPr^{\pi^*,\mu^*} \biggl[ x(t) \in \,\cdot\, \biggr] + \frac{1-\beta}{\beta} \,  \sPr^{\pi^*,\mu^*} \biggl[ x(0) \in \,\cdot\, \biggr] - \frac{1-\beta}{\beta} \,  \sPr^{\pi^*,\mu^*} \biggl[ x(0) \in \,\cdot\, \biggr] \\
&= \frac{1-\beta}{\beta} \, \sum_{t=0}^{\infty} \beta^t \, \sPr^{\pi^*,\mu^*} \biggl[ x(t) \in \,\cdot\, \biggr] - \frac{1-\beta}{\beta} \, \mu^*(\,\cdot\,) \,\, \text{(as $x(0) \sim \mu^*$)} \\
&= \frac{\nu^{*,\sX}(\,\cdot\,)}{\beta} - \frac{\mu^*(\,\cdot\,)}{\beta}  + \mu^*(\,\cdot\,).
\end{align*}
\normalsize
As $\mu^* = \nu^* \, p_{\mu^*}$, the last expression implies that $\nu^{*,\sX} = \mu^*$. Hence, $\mu^*$ satisfies (see remark above)
$$
\mu^*(\,\cdot\,) = \sum_{(x,a) \in \sX\times\sA} p_{\mu^*}(\,\cdot\,|x,a) \, \pi^*(a|x) \,  \mu^*(x),
$$
that is, $\mu^* \in \Lambda(\pi^*)$. This means that $(\pi^*,\mu^*)$ is an MFE.
\end{proof}

In order to determine the MFE, it is now sufficient to compute a pair $(\nu^*,\mu^*)$, which satisfies the conditions outlined in Lemma~\ref{com-lemma1}. To achieve this, we formulate an artificial game involving two players, which naturally leads to a GNEP. Solving for the Nash equilibrium in this artificial game yields the desired pair.

In the two-player game, the first player represents the generic agent in the MFG, while the second player represents the entire population. In this setup, an additional reward function alongside $\langle \nu, r_{\mu} \rangle$ is required, where it serves as the reward for the second player. It is worth mentioning that we possess complete freedom in selecting this reward function. Hence, one can regard this extra reward as a design parameter, which can be tailored to fulfill specific objectives.

Let $h: \M(\sX \times \sA) \times \M(\sX) \to [0, \infty)$ be a continuous artificial objective function for the second player, which depends on both $(\nu, \mu)$. With this, we formulate the following GNEP.

\begin{multicols}{2}
\centering{Player 1}
\begin{align*}
\text{Given $\mu$:} \,\,
&\text{maximize}_{\nu\in \M_+(\sX\times\sA)} \text{ } \langle \nu,r_{\mu} \rangle
\nonumber \\*
&\text{subject to  } \nu^{\sX} = (1-\beta)\mu + \beta \, \nu \, p_{\mu}  
\end{align*}\\
\centering{Player 2}
\begin{align*}
\text{Given $\nu$:} \,\,
&\text{maximize}_{\mu \in \M_+(\sX)} \text{ } h(\nu,\mu)
\nonumber \\*
&\text{subject to  } \mu = \nu \, p_{\mu}
\end{align*}
\end{multicols}

It is worth noting that in this formulation, there is an interdependency between the two reward functions and the admissible strategy sets. As a result, this situation precisely fits the definition of a GNEP. This allows us to employ techniques and methods that have been developed for solving such games in our pursuit of computing the MFE. For a more comprehensive introduction to GNEPs, we refer the reader to the survey by \citet{FaKa10}.

The following result immediately follows from Lemma~\ref{com-lemma1}.

\begin{lemma}\label{com-lemma2}
If $(\nu^*,\mu^*)$ is a Nash equilibrium of the GNEP above, then $(\pi^*,\mu^*)$ is an MFE for the related MFG, where $\nu^*(x,a) = \pi^*(a|x) \, \nu^{*,\sX}(x)$.
\end{lemma}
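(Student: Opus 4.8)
The plan is to unpack the definition of a Nash equilibrium for the GNEP and observe that its two defining conditions coincide precisely with hypotheses (a) and (b) of Lemma~\ref{com-lemma1}, after which the conclusion is immediate.

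First I would record what it means for $(\nu^*,\mu^*) \in \M_+(\sX\times\sA)\times\M_+(\sX)$ to be a Nash equilibrium. By definition, $\nu^*$ is a best response for Player~1 against $\mu^*$: it is feasible and maximizes $\langle \nu, r_{\mu^*}\rangle$ over $\nu \in \M_+(\sX\times\sA)$ subject to $\nu^{\sX} = (1-\beta)\mu^* + \beta\,\nu\,p_{\mu^*}$. Using the identity $\T_{\mu^*}\nu = \nu^{\sX} - \beta\,\nu\,p_{\mu^*}$, this constraint is exactly $\T_{\mu^*}(\nu) = (1-\beta)\mu^*$, so Player~1's program is verbatim the LP formulation of $\text{MDP}_{\mu^*}$. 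Consequently, $\nu^*$ is the optimal solution to that LP, which is precisely condition (a) of Lemma~\ref{com-lemma1}.

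Next I would extract condition (b) from Player~2's side. At the equilibrium, $\mu^*$ is a best response for Player~2 against $\nu^*$; in particular it is \emph{feasible}, meaning it satisfies Player~2's constraint $\mu = \nu^*\,p_{\mu}$ evaluated at $\mu = \mu^*$. Written out, this reads $\mu^*(\,\cdot\,) = \sum_{(x,a)\in\sX\times\sA} p_{\mu^*}(\,\cdot\,|x,a)\,\nu^*(x,a)$, which is exactly condition (b). Note that only feasibility of $\mu^*$ is needed here; the artificial objective $h$ and Player~2's optimality play no role in the argument.

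Finally, since $(\nu^*,\mu^*)$ now satisfies both (a) and (b) and lies in $\M_+(\sX\times\sA)\times\M_+(\sX)$, Lemma~\ref{com-lemma1} applies directly: disintegrating $\nu^*(x,a) = \pi^*(a|x)\,\nu^{*,\sX}(x)$ yields that $(\pi^*,\mu^*)$ is an MFE. I do not anticipate any genuine obstacle, since the entire content is the bookkeeping observation that the GNEP's equilibrium conditions are a restatement of the two hypotheses of Lemma~\ref{com-lemma1}; the one point worth stating carefully is that it is Player~1's \emph{optimality} that delivers (a), whereas only Player~2's \emph{feasibility} is required for (b).
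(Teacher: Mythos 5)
Your proposal is correct and matches the paper exactly: the paper states that Lemma~\ref{com-lemma2} ``immediately follows from Lemma~\ref{com-lemma1},'' and your write-up simply makes that implication explicit by identifying Player~1's best-response problem with the LP formulation of $\text{MDP}_{\mu^*}$ (condition (a)) and Player~2's feasibility with condition (b). Your closing observation---that only Player~2's feasibility, not its optimality, is needed---is a correct and worthwhile clarification, but it does not change the route.
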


Typically, GNEPs are expressed with inequality constraints rather than equality constraints. While it is possible to transform equality constraints into inequality constraints by duplicating them, we can instead use an alternative approach that keeps the number of constraints manageable, as shown below. Additionally, GNEPs are usually framed as a minimization problem for each agent. To align with this, we replace the terms $\langle \nu, r_{\mu} \rangle$ and $h(\nu, \mu)$ with their negative counterparts, $\langle \nu, -r_{\mu} \rangle$ (denoted as $\langle \nu, c_{\mu} \rangle$) and $-h(\nu, \mu)$ (denoted as $g(\nu, \mu)$). These changes allow us to reformulate each problem as a minimization.

\begin{multicols}{2}
\centering{Player 1}
\begin{align*}
\text{Given $\mu$:} \,\,
&\text{minimize}_{\nu\in \M_+(\sX\times\sA)} \text{ } \langle \nu,c_{\mu} \rangle
\nonumber \\*
&\text{subject to  } \nu^{\sX} \geq (1-\beta)\mu + \beta \, \nu \, p_{\mu}  
\end{align*}\\
\centering{Player 2}
\begin{align*}
\text{Given $\nu$:} \,\,
&\text{minimize}_{\mu \in \M_+(\sX)} \text{ } g(\nu,\mu)
\nonumber \\*
&\text{subject to  } \mu \geq \nu \, p_{\mu}, \,\, \langle \mu, {\bf 1} \rangle \geq 1
\end{align*}
\end{multicols}

Here, ${\bf 1}$ denotes the constant function equal to $1$. To represent the GNEP with inequality constraints, we introduce the extra constraint $\langle \mu, {\bf 1} \rangle \geq 1$, which does not substantially expand the number of constraints.

\begin{lemma}\label{com-lemma3}
If $(\nu^*,\mu^*)$ is an equilibrium solution of the GNEP with inequality constraint, then $(\pi^*,\mu^*)$ is an MFE for the related MFG, where $\nu^*(x,a) = \pi^*(a|x) \, \nu^{*,\sX}(x)$.
\end{lemma}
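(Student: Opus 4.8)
The plan is to show that at an equilibrium of the inequality-constrained GNEP both inequality constraints actually hold with equality, which returns us exactly to the equality-constrained situation already settled in Lemma~\ref{com-lemma2}, equivalently to conditions (a)--(b) of Lemma~\ref{com-lemma1}. The whole argument hinges on a total-mass bookkeeping.

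First I would record the mass identities. For any $\nu \in \M_+(\sX\times\sA)$ one has $\sum_{x}(\nu\,p_{\mu^*})(x) = \nu(\sX\times\sA)$, since each $p_{\mu^*}(\,\cdot\,|x,a)$ is a probability measure. Write $m_\nu \coloneqq \nu^*(\sX\times\sA)$ and $m_\mu \coloneqq \mu^*(\sX)$. Summing Player 1's constraint $\nu^{*,\sX} \geq (1-\beta)\mu^* + \beta\,\nu^*\,p_{\mu^*}$ over $\sX$ yields $m_\nu \geq (1-\beta)m_\mu + \beta m_\nu$, i.e. $m_\nu \geq m_\mu$; summing Player 2's constraint $\mu^* \geq \nu^*\,p_{\mu^*}$ over $\sX$ yields $m_\mu \geq m_\nu$. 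Hence $m_\mu = m_\nu$.

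Next I would combine this with nonnegativity of the slacks. Set $s_1(x) \coloneqq \nu^{*,\sX}(x) - (1-\beta)\mu^*(x) - \beta\,(\nu^*\,p_{\mu^*})(x) \geq 0$ and $s_2(x) \coloneqq \mu^*(x) - (\nu^*\,p_{\mu^*})(x) \geq 0$. The mass computation gives $\sum_x s_1(x) = (1-\beta)(m_\nu - m_\mu) = 0$ and $\sum_x s_2(x) = m_\mu - m_\nu = 0$; since each slack is nonnegative and sums to zero, both vanish identically. Therefore $\nu^{*,\sX} = (1-\beta)\mu^* + \beta\,\nu^*\,p_{\mu^*}$ and $\mu^* = \nu^*\,p_{\mu^*}$, the latter being precisely condition (b) of Lemma~\ref{com-lemma1}. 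For condition (a), I would note that the equality feasible set $\{\nu \geq 0 : \T_{\mu^*}\nu = (1-\beta)\mu^*\}$ is contained in the inequality feasible set $\{\nu \geq 0 : \T_{\mu^*}\nu \geq (1-\beta)\mu^*\}$ over which $\nu^*$ minimizes $\langle \nu, c_{\mu^*}\rangle = -\langle \nu, r_{\mu^*}\rangle$ by the equilibrium property; since $\nu^*$ also lies in the smaller set, it minimizes there too, hence maximizes $\langle \nu, r_{\mu^*}\rangle$ and is the optimal occupation measure for $\text{MDP}_{\mu^*}$. With (a) and (b) verified, disintegrating $\nu^* = \pi^*\,\nu^{*,\sX}$ and invoking Lemma~\ref{com-lemma1} gives that $(\pi^*,\mu^*)$ is an MFE.

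The step I expect to be the main obstacle is the tightness argument itself: a priori the relaxed constraints could be strict, and it is only the simultaneous summation of \emph{both} players' constraints --- using that $p_{\mu^*}$ is stochastic so that $\nu^*\,p_{\mu^*}$ preserves total mass --- that forces the two opposing mass inequalities to collapse to $m_\mu = m_\nu$ and thereby kills both slacks. A secondary delicate point is normalization: the auxiliary constraint $\langle \mu, {\bf 1}\rangle \geq 1$ excludes the degenerate solution $(\nu^*,\mu^*)=(0,0)$ that satisfies all the homogeneous inequalities, and one must check, in conjunction with the requirement that $p_{\mu^*}$ be defined on $\Pnew(\sX)$, that this indeed yields $m_\mu = 1$ so that $\mu^*$ is a genuine probability measure, as required for Lemma~\ref{com-lemma1} to apply.
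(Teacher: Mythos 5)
Your proposal follows the same overall strategy as the paper's proof: show that both players' inequality constraints must be tight at an equilibrium, thereby reducing to the equality-constrained situation already handled in Lemma~\ref{com-lemma1}. Your execution of the tightness step --- summing each constraint over $\sX$ to get the opposing mass inequalities $m_\nu\ge m_\mu$ and $m_\mu\ge m_\nu$, then killing the nonnegative slacks because each sums to zero --- is a cleaner version of the paper's argument, which first normalizes both measures and then invokes the fact that a probability measure dominated by a probability measure must equal it. Your justification of condition~(a) is also more careful than the paper's, which merely asserts that the inequality- and equality-constrained LPs are ``equivalent''; observing that $\nu^*$ minimizes over the larger feasible set while lying in the smaller one is exactly the right way to say it.

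The point you flag but do not close --- that $m_\mu=1$ --- is a genuine loose end, and it is not one the paper resolves cleanly either. The paper obtains $\mu^*(\sX)\le 1$ from the identity $\sum_y(\nu^*\,p_{\mu^*})(y)=\nu^*(\sX\times\sA)\,\mu^*(\sX)$, whereas your identity $\sum_y(\nu^*\,p_{\mu^*})(y)=\nu^*(\sX\times\sA)$ is the correct one given the stated signature $p:\sX\times\sA\times\Pnew(\sX)\to\Pnew(\sX)$ (each $p_{\mu^*}(\,\cdot\,|x,a)$ has unit mass). The extra factor $\mu^*(\sX)$ is inherited from the linear case of \citet{Sal23}, where the linearly extended kernel satisfies $\sum_y p_\mu(y|x,a)=\mu(\sX)$ for unnormalized $\mu$; for a general nonlinear kernel, $p_\mu$ is simply undefined off $\Pnew(\sX)$. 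Under your (correct) reading the constraints force only $m_\mu=m_\nu\ge 1$, so the upper bound $m_\mu\le 1$ must be built into Player~2's feasible set --- e.g.\ by restricting $\mu$ to $\Pnew(\sX)$ or adding $\langle\mu,{\bf 1}\rangle\le 1$ --- after which the rest of your argument goes through and yields the lemma. In short: same route as the paper, executed more carefully, with one normalization step that neither you nor the paper fully closes under the hypotheses as literally stated.
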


\ns{
\begin{proof}
We initially prove that both $\nu^*$ and $\mu^*$ are probability measures.
Since 
$$
\nu^{*,\sX} \geq (1-\beta) \, \mu^* + \beta \, \nu^* \, p_{\mu^*},
$$
we have $\nu^*(\sX\times\sA) \geq (1-\beta) \, \mu^*(\sX) + \beta \, \nu^*(\sX\times\sA) \, \mu^*(\sX)$. Similarly, since 
$$
\mu^* \geq \nu^* \, p_{\mu^*},
$$
we have $\mu^*(\sX) \geq \nu^*(\sX\times\sA) \, \mu^*(\sX)$. Hence, 
\begin{align*}
\nu^*(\sX\times\sA)  &\geq (1-\beta) \nu^*(\sX\times\sA) \, \mu^*(\sX) +  \beta \, \nu^*(\sX\times\sA) \, \mu^*(\sX) = \nu^*(\sX\times\sA) \, \mu^*(\sX). 
\end{align*}
Therefore, $\mu^*(\sX) \leq 1$ and $\nu^*(\sX\times\sA)\leq 1$. Since $\langle \mu^*, {\bf 1} \rangle = \mu^*(\sX) \geq 1$, we also have $\mu^*(\sX) = 1$, and so,
$$
\nu^*(\sX\times\sA) \geq (1-\beta) + \beta \, \nu^*(\sX\times\sA). 
$$
Hence, $\nu^*(\sX\times\sA) \geq 1$. This implies that $\nu^*(\sX\times\sA) = 1$. That is, both $\nu^*$ and $\mu^*$ are probability measures. 

In view of the above result, $(1-\beta) \, \mu^* + \beta \, \nu^* \, p_{\mu^*}$ and $\nu^* \, p_{\mu^*}$ are also probability measures. But it is known that if two probability measures $\nu$ and $\theta$ satisfy $\nu \leq \theta$ for any event, then $\nu=\theta$. Hence, 
$
\nu^{*,\sX} = (1-\beta) \, \mu^* + \beta \, \nu^* \, p_{\mu^*}$, 
$\mu^* = \nu^* \, p_{\mu^*}$
Note that given $\mu^*$, the following optimization problems are equivalent
\begin{multicols}{2}
\centering{Problem 1}
\begin{align*}
&\text{minimize}_{\nu\in \M_+(\sX\times\sA)} \text{ } \langle \nu,c_{\mu^*} \rangle
\nonumber \\*
&\text{subject to  } \nu^{\sX} \geq (1-\beta)\mu^* + \beta \, \nu \, p_{\mu^*}  
\end{align*}\\
\centering{Problem 2}
\begin{align*}
&\text{minimize}_{\nu\in \M_+(\sX\times\sA)} \text{ } \langle \nu,c_{\mu^*} \rangle
\nonumber \\*
&\text{subject to  } \nu^{\sX} = (1-\beta)\mu^* + \beta \, \nu \, p_{\mu^*}  
\end{align*}
\end{multicols}
The second problem is an LP formulation of MDP$_{\mu^*}$ and so $\nu^*$ is the optimal occupation measure. Hence, $\pi^* \in \Lambda(\mu^*)$. With the same analysis as in the proof of Lemma~\ref{com-lemma1}, we can also prove that $\nu^{*,\sX}=\mu^*$ using $\mu^* = \nu^* \, p_{\mu^*}$. Hence, by $\nu^{*,\sX} = (1-\beta) \, \mu^* + \beta \, \nu^* \, p_{\mu^*}$, we have 
$$
\mu^*(\,\cdot\,) = \sum_{(x,a) \in \sX\times\sA} p_{\mu^*}(\,\cdot\,|x,a) \, \pi^*(a|x) \,  \mu^*(x);
$$
that is, $\mu^* \in \Phi(\pi^*)$. This implies that $(\mu^*,\pi^*)$ is MFE.

\end{proof}
}

\tk{The primary novelty of our approach lies in the GNEP formulation above and its connection to the MFE. This framework enriches the MFG analysis while offering flexibility, as existing GNEP-solving algorithms \citep[see][]{FaKa10} can now be applied to compute the MFE.}


In the upcoming section, we will adapt an algorithm proposed by \citet{AxFaKaSa11} to compute an MFE. This algorithm employs an interior-point method specifically designed to solve the Karush-Kuhn-Tucker (KKT) conditions associated with the GNEP formulation.

\subsection{Computing Equilibrium of GNEP}\label{sub2sec3}

We now give a more explicit formulation of the inequality constrained GNEP that is introduced in the previous section.
\begin{multicols}{2}
\centering{Player 1}
\begin{align*}
\text{Given $\mu$:} \,\,
&\text{minimize}_{\nu\in \R^{\sX\times\sA}} \text{ } \langle \nu,c_{\mu} \rangle
\nonumber \\*
&\text{subject to  } \nu^{\sX} \geq (1-\beta)\mu + \beta \, \nu \, p_{\mu}  \\*
&\Id \cdot \, \nu \geq 0 
\end{align*}\\ 
\centering{Player 2}
\begin{align*}
\text{Given $\nu$:} \,\,
&\text{minimize}_{\mu \in \R^{\sX}} \text{ } g(\nu,\mu)
\nonumber \\*
&\text{subject to  } \mu \geq \nu \, p_{\mu}, \,\, \langle \mu, {\bf 1} \rangle \geq 1 \\*
&\Id \cdot \, \mu \geq 0 
\end{align*}
\end{multicols}

To solve this problem, we adapt an algorithm introduced by \citet{AxFaKaSa11} that employs an interior-point method leading to a solution for KKT conditions. Since we have the flexibility in selecting the auxiliary cost function $g$ for the second player, we assume that $g$ is twice-continuously differentiable and convex with respect to $\mu$ for any given $\nu$.

\ns{
\begin{assumption}
We suppose that $p_{\mu}$ and $c_{\mu}$ are twice-continuously differentiable with respect to $\mu$. Moreover, $p_{\mu}(y|x,a)$ is convex with respect to $\mu$, for all $(y,x,a) \in \sX \times \sX \times \sX$. 
\end{assumption}

In linear MFGs \citep[see][]{Sal23}, the state transition probability $p_{\mu}$ and the one-stage reward function $r_{\mu}$ are linear in $\mu$, meaning the above assumption is unnecessary for the linear case. However, in the GNEP formulation, both the objectives and constraints become nonlinear. As a result, when the policy of one agent is fixed, the optimization problem faced by the other agent turns into a nonlinear programming problem, unlike in linear MFGs.

Under these conditions, the inequality constrained GNEP meets the requirements in assumptions A1 and A2  in \citet{AxFaKaSa11}. 
}

Let us define the functions $h_1: \R^{\sX\times\sA} \times \R^{\sX} \rightarrow \R^{\sX\times\sA}\times\R^{\sX}$ and $h_2: \R^{\sX\times\sA} \times \R^{\sX} \rightarrow \R^{\sX}\times\R\times\R^{\sX}$ as 
\begin{align*}
h_1(\nu,\mu) := 
\begin{pmatrix}
-\Id \cdot \, \nu \\
-\nu^{\sX} + (1-\beta)\mu + \beta \, \nu \, p_{\mu}
\end{pmatrix}, \,\,\,\,\,\,
h_2(\nu,\mu) := 
\begin{pmatrix}
-\Id \cdot \, \mu \\
-\langle \mu, {\bf 1} \rangle + 1 \\
-\mu + \nu \, p_{\mu}
\end{pmatrix}.
\end{align*}
Then we can write the GNEP above in the following form:
\begin{multicols}{2}
\centering{Player 1}
\begin{align*}
\text{Given $\mu$:} \,\,
&\text{minimize}_{\nu\in \R^{\sX\times\sA}} \text{ } \langle \nu,c_{\mu} \rangle
\nonumber \\*
&\text{subject to  } h_1(\nu,\mu) \leq 0  
\end{align*}\\
\centering{Player 2}
\begin{align*}
\text{Given $\nu$:} \,\,
&\text{minimize}_{\mu \in \R^{\sX}} \text{ } g(\nu,\mu)
\nonumber \\*
&\text{subject to  } h_2(\nu,\mu) \leq 0
\end{align*}
\end{multicols}

Now, let us derive the joint KKT conditions for player 1 and player 2, whose solution gives a Nash equilibrium for GNEP. To this end, we start by defining the functions
\begin{align*}
L_1(\nu,\mu,\lambda) &:= \langle \nu,c_{\mu} \rangle + \langle h_1(\nu,\mu),\lambda \rangle, \\
L_2(\nu,\mu,\gamma) &:= g(\nu,\mu) + \langle h_2(\nu,\mu),\gamma \rangle,
\end{align*}
where $\lambda$ and $\gamma$ are Lagrange multipliers of player 1 and player 2, respectively. 
Setting ${\bf F}(\nu,\mu,\lambda,\gamma) := \left(\nabla_{\nu} L_1(\nu,\mu,\lambda), \nabla_{\mu} L_2(\nu,\mu,\gamma)\right)$ and ${\bf h}(\nu,\mu) := \left(h_1(\nu,\mu), h_2(\nu,\mu)\right)$, the joint KKT conditions for player 1 and player 2 can be written as 
\begin{tcolorbox} 
[colback=white!100]
\begin{center}
\vspace{-15pt}
$$
{\bf F}(\nu,\mu,\lambda,\gamma) = 0, \,\, \lambda,\gamma \geq 0, \,\, {\bf h}(\nu,\mu) \leq 0, \,\, \langle {\bf h}(\nu,\mu), (\lambda,\gamma) \rangle = 0.
$$
\end{center}
\end{tcolorbox}


\paragraph{Root Finding Problem and Algorithm}

To transform the joint KKT conditions into a root finding problem, we introduce slack variables $({\bar \lambda},{\bar \gamma})$, where ${\bar \lambda} \in \R^{\sX\times\sA}\times\R^{\sX}$ and ${\bar \gamma} \in \R^{\sX}\times\R\times\R^{\sX}$, and define 
\begin{align*}
H(z) &:= H(\nu,\mu,\lambda,\gamma,{\bar \lambda},{\bar \gamma}) :=
\begin{pmatrix}
{\bf F}(\nu,\mu,\lambda,\gamma)  \\
{\bf h}(\nu,\mu) + ({\bar \lambda},{\bar \gamma}) \\
 (\lambda,\gamma) \circ ({\bar \lambda},{\bar \gamma}) 
\end{pmatrix} \\
\end{align*}
and  $Z := \left\{ z=(\nu,\mu,\lambda,\gamma,{\bar \lambda},{\bar \gamma}): (\lambda,\gamma), ({\bar \lambda},{\bar \gamma}) \geq 0 \right\}$, where $(\lambda,\gamma) \circ ({\bar \lambda},{\bar \gamma})$ is the vector formed by diagonal elements of the outer product of the vectors $(\lambda,\gamma)$ and  $({\bar \lambda},{\bar \gamma})$. 
Then it is straightforward to show that $(\nu,\mu,\lambda,\gamma)$ satisfy joint KKT conditions if and only if $z \coloneqq (\nu,\mu,\lambda,\gamma, {\bar \lambda},{\bar \gamma})$ satisfies the constrained root finding problem 
\begin{tcolorbox} 
[colback=white!100]
\vspace{-15pt}
\begin{center}
$$H(z) = 0, \,\, z \in Z$$
\end{center}
\end{tcolorbox}
\noindent for some $({\bar \lambda},{\bar \gamma})$.

To find a solution to the constrained root finding problem, an interior-point algorithm is developed in \citet{AxFaKaSa11}. In the remainder of this section, we explain this algorithm, which depends on the potential reduction method from \citet{MoPa99}. Let $n = |\sX\times\sA| + |\sX|$ (total number of variables in GNEP) and $m := |\sX\times\sA| + 3 \, |\sX| + 1$ (total number of constraints in the GNEP). Hence $H: \R^{n} \times \R^{2m} \rightarrow \R^{n} \times \R^{2m}$ and $Z = \R^n \times \R_+^{2m}$. We take a potential function on the interior of $Z$ as 
$$
p(u,v) = K \, \log\left(\|u\|^2 + \|v\|^2 \right) - \sum_{i=1}^{2m} \log(v_i) ,
$$ 
where $K > m$, which penalizes points that are close to the boundary of $Z$ that are far from the origin. Composing $p$ and $H$, we get a potential function for the constrained root finding problem as 
$$
\psi(z) := p(H(z)),
$$
where $z \in (\intr Z) \cap H^{-1}(\intr Z) =: Z_I$.

Let $\nabla H$ denote the Jacobian of the function $H$. Before describing the algorithm, we need to impose the following condition to establish its convergence.
\begin{assumption}\label{assumption2}
For any $z \in Z_I$, the Jacobian $\nabla H(z)$ is invertible. 
\end{assumption}
In general, providing a sufficient condition for the invertibility of the Jacobian of $H$ at any point $z \in Z_I$ in terms of system components is quite challenging. Therefore, for practical applications of the below algorithm, it is preferable to substitute the inverse of $\nabla H(z)$ with its Moore-Penrose pseudo-inverse in order to be on the safe side. Let 
$$a := \begin{pmatrix}
{\bf 0}_n^T &
{\bf 1}_{2m}^T 
\end{pmatrix} \bigg/ \left\|\begin{pmatrix}
{\bf 0}_n^T &
{\bf 1}_{2m}^T 
\end{pmatrix}\right\|.$$
Then, the algorithm is given below.

\begin{algorithm}[H]
\caption{}
\label{av-H2}
\begin{algorithmic}
\STATE{Inputs: $\kappa \in (0,1)$ and $a$}
\STATE{Start with $z_0$}
\FOR{$k=0,1,2\ldots$}
\STATE{
\begin{itemize}
\item[(a)] Choose $\sigma_k \in [0,1)$ and set
$$
d_k = \left(\nabla H(z_k)\right)^{-1} \left(\sigma_k \langle a, H(z_k) \rangle \, a  - H(z_k)\right)
$$
\item[(b)] Compute a stepsize $t_k:= \max \{\kappa^l: l = 0,1,2,\ldots \}$ such that 
\begin{align}\label{aa-eq1}
z_k + t_k \, d_k \in Z_I, \,\,\, \psi(z_k + t_k \, d_k) \leq \psi(z_k) +  t_k \, \langle \nabla \psi(z_k) , d_k \rangle 
\end{align}
\item[(c)] Set $z_{k+1} := z_k + t_k \, d_k$
\end{itemize}
}
\ENDFOR
\end{algorithmic}
\end{algorithm}

Note that by \citet[Lemma 11.3.3]{FaPa03}, we have $\langle \nabla \psi(z_k) , d_k \rangle < 0$ for
\begin{align*}
d_k = \left(\nabla H(z_k)\right)^{-1} \left(\sigma_k \langle a, H(z_k) \rangle \, a  - H(z_k)\right). 
\end{align*}
Hence, one can always find $t_k$ that satisfies (\ref{aa-eq1}). The following convergence result follows from \citet[Theorems 4.3 and 4.10]{AxFaKaSa11}.

\begin{theorem}\label{com-theorem}
Under Assumption~\ref{assumption2}, pick $\sigma_k$ so that 
$$
\limsup_{k\rightarrow\infty} \sigma_k < 1.
$$
Then, the sequence $\{z_k\} := \{(\nu_k,\mu_k,\lambda_k,\gamma_k,{\bar \lambda}_k,{\bar \gamma}_k)\}$ is bounded and any accumulation point $z^* = (\nu^*,\mu^*,\lambda^*,\gamma^*,{\bar \lambda}^*,{\bar \gamma}^*)$ of $\{z_k\}$ is a solution to the constrained root finding problem $H(z^*) = 0$; that is, $H(z_k) \rightarrow 0$ as $k\rightarrow\infty$. Hence, $(\pi^*,\mu^*)$ is an MFE for the related MFG, where $\nu^*(x,a) = \pi^*(a|x) \, \nu^{*,\sX}(x)$. 
\end{theorem}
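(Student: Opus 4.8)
The plan is to treat the convergence of Algorithm~\ref{av-H2} as a direct consequence of the interior-point convergence theory of \citet{AxFaKaSa11}, reserving the real effort for two points: verifying that our GNEP fits the standing hypotheses of that theory, and translating the resulting root of $H$ back into a mean-field equilibrium via Lemma~\ref{com-lemma3}. The convergence assertions (boundedness of $\{z_k\}$ and $H(z_k)\to 0$) are then largely a matter of matching notation and invoking the cited theorems.

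First I would confirm that the inequality-constrained GNEP satisfies assumptions A1 and A2 of \citet{AxFaKaSa11}. The maps $c_{\mu}$, $g$, and $p_{\mu}$, and hence $h_1$ and $h_2$, are twice continuously differentiable by the standing differentiability assumption; moreover $p_{\mu}(y|x,a)$ is convex in $\mu$ and $g$ is convex in $\mu$, so each player's feasible set is convex and each player's objective is convex in its own decision variable given the rival's choice. This is exactly what the text has already flagged as yielding A1 and A2. Assumption~\ref{assumption2} supplies the remaining nondegeneracy hypothesis: invertibility of $\nabla H(z)$ on $Z_I$ makes the Newton-type direction $d_k=(\nabla H(z_k))^{-1}(\sigma_k\langle a,H(z_k)\rangle\,a - H(z_k))$ well-defined, and the restriction $\limsup_k \sigma_k < 1$ is precisely the centering-parameter condition under which their potential-reduction analysis applies.

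Next I would run the potential-reduction machinery. By \citet[Lemma 11.3.3]{FaPa03} the direction $d_k$ is a strict descent direction for $\psi$, so $\langle \nabla\psi(z_k),d_k\rangle < 0$ and the Armijo-type line search in (\ref{aa-eq1}) terminates, producing $z_{k+1}\in Z_I$ with $\psi(z_{k+1})<\psi(z_k)$. Theorem~4.3 of \citet{AxFaKaSa11} then gives boundedness of $\{z_k\}$, and Theorem~4.10 shows that the potential reduction drives $\|H(z_k)\|\to 0$; since $Z$ is closed and $H$ is continuous, every accumulation point $z^*$ lies in $Z$ and satisfies $H(z^*)=0$. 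Because $H(z^*)=0$ with $z^*\in Z$ encodes exactly the joint KKT system $\mathbf{F}(\nu^*,\mu^*,\lambda^*,\gamma^*)=0$, $\lambda^*,\gamma^*\geq 0$, $\mathbf{h}(\nu^*,\mu^*)\leq 0$, together with complementarity $\langle \mathbf{h}(\nu^*,\mu^*),(\lambda^*,\gamma^*)\rangle=0$, the pair $(\nu^*,\mu^*)$ is a joint KKT point of the two coupled programs.

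Finally I would upgrade this KKT point to a Nash equilibrium and then to an MFE. Given $\mu^*$, Player~1's program is a linear program in $\nu$, so its KKT conditions are sufficient; given $\nu^*$, Player~2's program is convex, since $g$ is convex in $\mu$ and, as $\nu^*\geq 0$ and $p_{\mu}$ is convex in $\mu$, the constraint $\nu^* p_{\mu}-\mu\leq 0$ is convex while the remaining constraints are linear. Hence, under a suitable constraint qualification, the KKT conditions are also sufficient for Player~2, so each component of $(\nu^*,\mu^*)$ is a best response to the other and $(\nu^*,\mu^*)$ is an equilibrium of the inequality-constrained GNEP. Applying Lemma~\ref{com-lemma3} with the disintegration $\nu^*(x,a)=\pi^*(a|x)\,\nu^{*,\sX}(x)$ then yields that $(\pi^*,\mu^*)$ is an MFE. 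The main obstacle is this last step: securing a constraint qualification so that the KKT root genuinely corresponds to a Nash equilibrium rather than a spurious stationary point, and checking the convexity structure carefully enough that KKT sufficiency applies to Player~2's nonlinear program; the convergence itself is essentially a citation once the framework is matched.
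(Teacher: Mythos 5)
Your proposal follows essentially the same route as the paper, which itself gives no proof beyond citing \citet[Theorems 4.3 and 4.10]{AxFaKaSa11} for the boundedness of $\{z_k\}$ and $H(z_k)\rightarrow 0$, together with \citet[Lemma 11.3.3]{FaPa03} for the descent property and Lemma~\ref{com-lemma3} to pass from the GNEP equilibrium to the MFE; your added care in checking assumptions A1--A2 and in upgrading the joint KKT point to a Nash equilibrium only makes the argument more complete. One small correction to your final step: for the convex programs of Players 1 and 2 no constraint qualification is needed for the KKT conditions to be \emph{sufficient} for optimality (a CQ is only required for their necessity), so the convexity you verified already closes the argument.
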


This algorithm provides a notable improvement compared to existing methods for MFGs. Through tracking the changes in the components of the vector function $H$ during each step of the algorithm, we can confidently confirm the convergence to the MFE when these components approach zero.

\section{Numerical Examples} \label{example}

In this section, we consider two numerical examples based on the malware spread model studied in \citet{SuAd19}.

\subsection{A Malware Spread Model With Two States}

We start with a malware spread model with two states, where we suppose that there are large number of agents, and each agent has a local state $x_i(t) \in \{0,1\}$, where $x_i(t) = 0$ represents the ``healthy" state and $x_i(t) = 1$ represents the ``infected" state. Each agent can take an action $a_i(t) \in \{0,1\}$, where $a_i(t) = 0$ represents ``do nothing" and $a_i(t) = 1$ represents ``repair". The dynamics is given by 
\begin{align*}
x_i(t+1) =
\begin{cases}
x_i(t) + (1-x_i(t)) \, w_i(t) & \text{if $a_i(t)=0$}, \\
0 & \text{if $a_i(t)=1$},
\end{cases}
\end{align*}
where $w_i(t) \in \{0,1\}$ is a Bernoulli random variable with success probability $q$, which gives the probability of an agent getting infected. In this setting, if an agent chooses not to take any action, they may be infected with probability $q$, but if they choose to take a repair action, they return to the healthy state. In the infinite population limit, each agent pays a cost 
$$
c\left(x,a,\mu\right) = \theta_1 \, \mu(0) \, x + (\theta_1+\theta_2) \, \mu(1) \, x + \theta_3 \, a = (\theta_1  + \theta_2 \, \mu(1)) \, x + \theta_3 \, a,
$$
where $\mu$ is the mean-field term. Here, $\theta_3$ is the cost of repair, and $(\theta_1+\theta_2 \, \mu(1))$ represents the risk of being infected. In the IRL problem, it is assumed that the variables $\theta \coloneqq (\theta_1,\theta_2,\theta_3) \in \R^3$ are unknown. Therefore, we suppose that the cost is an element of the function class
$$
\cR \coloneqq \left\{c(x,a,\mu) = \langle \theta, f(x,a,\mu) \rangle: \theta \in \R^3, \,\, f:\sX\times\sA\times\Pnew(\sX) \rightarrow \R^3 \right\},
$$
where $f(x,a,\mu) \coloneqq (x,x \cdot \mu(1),a)$.

In the first step, we consider the forward RL problem with known parameters $(\theta_1,\theta_2,\theta_3)$ and compute a corresponding mean-field equilibrium $(\pi_E,\mu_E)$ by using the GNEP formulation. Subsequently, we use this computed $(\pi_E,\mu_E)$ to generate the feature expectation vector 
$$
\langle f \rangle_{\pi_E,\mu_E} = E^{\pi_E,\mu_E}\left[\sum_{t=0}^{\infty} \beta^t \,f(x(t),a(t),\mu_E)\right]
$$
in the maximum causal entropy IRL problem to determine the policy that maximizes the causal entropy under the feature expectation constraint. 

\subsubsection{Step 1: Finding MFE}

In the infinite population limit, the stationary version of the problem is studied and the model is formulated as a GNEP, where the cost function for player 2 is taken to be the same as that of player 1. For numerical experiments, we use the following parameters $\theta_1=0.2$, $\theta_2 = 1$, $\theta_3 =0.4$, $\beta = 0.8$, $q=0.9$. We use MATLAB to perform the numerical experiments. The algorithm runs for $10000$ iterations and uses the following parameters $\sigma_k = 0.1$, $\kappa=0.001$. To perform (\ref{aa-eq1}) in Algorithm~\ref{av-H2}, we use Armijo line search.

Now let us look at the behavior of the mean-field term. Mean-field term converges to the distribution $[0.65, 0.35]$. Hence, at the equilibrium, $65\%$ of the states are healthy. Moreover, equilibrium policy converges to the conditional distributions $\pi(\,\cdot\,|0) = [0.61,0.39]$ and  $\pi(\,\cdot\,|1) = [0,1]$. Hence, once an agent is infected, then it should apply repair action with probability $1$. However, if the agent is healthy, then it should do nothing with probability $0.61$. Let $\nu_E(x,a) \coloneqq \pi_E(a|x) \, \mu_E(x)$ denote the joint distribution on $\sX \times \sA$ induced by mean-field equilibrium $(\pi_E,\mu_E)$. Then, we have 
$$
\nu_E = \begin{bmatrix}
0.3965 & 0.2535 \\0 & 0.35
\end{bmatrix}.
$$ 
Using this joint distribution we can recover both $\mu_E$ and $\pi_E$. 

%

\subsubsection{Step 2: Solving Maximum Entropy IRL Given MFE}

We now feed the MFE found in the previous section into the IRL problem  to generate the feature expectation vector and find the policy that solves the corresponding maximum causal entropy problem. We use MATLAB for the numerical computations. The gradient descent algorithm uses the following rate $\gamma = 0.5$. We stop the iteration when the each component of the gradient of $g$ becomes less than $O(10^{-2})$.

Note that the precise value of the minimum of $g$ holds less significance in this context. The key point here is that the Boltzmann distribution $\nu^*_{\btheta^*,\blambda^*,\bxi^*}$ computed at the minimizer $(\btheta^*,\blambda^*,\bxi^*)$ of $g$ is the optimal solution for $\mathbf{(OPT_2)}$. In view of the proof of Theorem~\ref{equiv_thm}, the policy 
$$\pi_{\nu^*_{\btheta^*,\blambda^*,\bxi^*}}(a|x) = \frac{\nu^*_{\btheta^*,\blambda^*,\bxi^*}(x,a)}{\nu_{\btheta^*,\blambda^*,\bxi^*}^{*,\sX}(x)}$$
solves the maximum causal entropy IRL problem.

It turns out that the gradient descent algorithm outputs the following Boltzman distribution 
$$
\nu^*_{\btheta^*,\blambda^*,\bxi^*} = \begin{bmatrix}
0.3960 & 0.2540 \\0 & 0.35
\end{bmatrix}
$$
at the minimizer $(\btheta^*,\blambda^*,\bxi^*)$ of $g$. Note that this is very close to $\nu_E$. As a result, the corresponding policy 
$$
\pi_{\nu^*_{\btheta^*,\blambda^*,\bxi^*}} = \begin{bmatrix}
0.6093 & 0.3907 \\0 & 1
\end{bmatrix}
$$
is, as expected, very close to the equilibrium policy $\pi_E$, which is unknown to the player. Recall that only the feature expectation vector is available to the player in the IRL setting. Although the equilibrium policy $\pi_E$ and the maximum causal entropy policy $\pi_{\nu^*_{\btheta^*,\blambda^*,\bxi^*}}$ might yield the same feature expectation vector under $\mu_E$, their behavior can differ significantly. In this numerical example, the resemblance between the policies $\pi_{\nu^*_{\btheta^*,\blambda^*,\bxi^*}}$ and $\pi_E$ occur due to the feature vector structure  $f(x,a,\mu) = (x,x \cdot \mu(1),a)$. Specifically, the numerical example's feature expectation matching constraint specifies that $\nu^{*,\sX}_{\btheta^*,\blambda^*,\bxi^*} = \nu_E^{\sX}$ and $\nu^{*,\sA}_{\btheta^*,\blambda^*,\bxi^*} = \nu_E^{\sA}$. With the additional constraint 
$$
\mu_E(z) = \sum_{(x,a) \in \sX \times \sA} p(z|y,a,\mu_E) \, \nu^{*,\sX}_{\btheta^*,\blambda^*,\bxi^*}(y,a) 
$$
in the maximum causal entropy problem, the equivalence of $\nu^*_{\btheta^*,\blambda^*,\bxi^*}$ and $\nu_E$ can be established. Changing the feature vector structure could potentially lead to different solutions for the maximum causal entropy problem compared to $\pi_E$, but this still leads to a mean-field equilibrium with $\mu_E$.

\ns{
\subsection{A Malware Spread Model With Ten States}

We now consider a malware spread model with ten states, i.e., \( \sX = \{0, 0.1, 0.2, \dots, 0.9\} \). The key difference between this model and the previous one lies in the transition probabilities. Specifically, the transitions are given by  
\[
p(y \mid x, a) =
\begin{cases}
\Unif(\{x, x+1, \dots, 0.9\}) & \text{if } a=0, \\
1_{\{y=0\}} & \text{if } a=1.
\end{cases}
\]
Here, choosing action \( a = 1 \) resets the system to the healthy state \( 0 \), whereas selecting \( a = 0 \) leads to a worsening state, transitioning uniformly among larger states.  

In the infinite population limit, each agent incurs a cost  
\[
c\left(x, a, \mu\right) = \theta_1 x + \theta_2 \mu_{\av} x + \theta_3 a,
\]
where \( \mu_{\av} \) represents the mean of the mean-field term. The parameter \( \theta_3 \) corresponds to the cost of repair, while \( \theta_1 + \theta_2 \mu_{\av} \) captures the infection risk.  

In the IRL problem, the parameters \( \theta \coloneqq (\theta_1, \theta_2, \theta_3) \in \mathbb{R}^3 \) are unknown. Thus, we assume that the cost function belongs to the class  
\[
\cR \coloneqq \left\{c(x, a, \mu) = \langle \theta, f(x, a, \mu) \rangle : \theta \in \mathbb{R}^3, \, f:\sX\times\sA\times\Pnew(\sX) \rightarrow \mathbb{R}^3 \right\},
\]
where the feature mapping is given by \( f(x, a, \mu) \coloneqq (x, x \mu_{\av}, a) \).

\subsubsection{Step 1: Finding MFE}

We consider the same setup as in the previous example but now with ten states. Consequently, the details of the MATLAB code remain unchanged. For numerical experiments, we use the parameter values \( \theta_1 = 0.1 \), \( \theta_2 = 1 \), \( \theta_3 = 0.4 \), and \( \beta = 0.8 \).  

Next, we examine the behavior of the mean-field term. The mean-field distribution converges to  
\[
\mu_E = [0.3338, 0.0975, 0.0429, 0.0501, 0.0601, 0.0751, 0.1001, 0.1001, 0.1001, 0.1001].
\]
Moreover, the equilibrium policy converges to the following deterministic policy:
\[
\pi_E = \begin{bmatrix}
1 & 1 & 1 & 1 & 1 & 1 & 1 & 0 & 0 & 0 \\ 
0 & 0 & 0 & 0 & 0 & 0 & 0 & 1 & 1 & 1
\end{bmatrix}.
\]  
Let \( \nu_E(x,a) \coloneqq \pi_E(a | x) \, \mu_E(x) \) denote the joint distribution on \( \sX \times \sA \) induced by the mean-field equilibrium \( (\pi_E, \mu_E) \). Then, we obtain  
\begin{multline*}
\nu_E = \\ \begin{bmatrix}
0.3338 & 0.0375 & 0.0429 & 0.0501 & 0.0601 & 0.0751 & 0.1001 & 0 & 0 & 0 \\ 
0 & 0 & 0 & 0 & 0 & 0 & 0 & 0.1001 & 0.1001 & 0.1001
\end{bmatrix}.
\end{multline*}
From this joint distribution, we can recover both \( \mu_E \) and \( \pi_E \).

\subsubsection{Step 2: Solving Maximum Entropy IRL Given MFE}

We now use the MFE found in the previous section as input to the IRL problem to compute the feature expectation vector and determine the policy that solves the corresponding maximum causal entropy problem. The numerical computations are carried out in MATLAB using a gradient descent algorithm with step size \( \gamma = 0.0025 \).  

As discussed in the previous example, the precise minimum value of \( g \) is not of primary importance. The key result is that the Boltzmann distribution \( \nu^*_{\btheta^*,\blambda^*,\bxi^*} \), computed at the minimizer \( (\btheta^*,\blambda^*,\bxi^*) \) of \( g \), serves as the optimal solution for \( \mathbf{(OPT_2)} \). By Theorem~\ref{equiv_thm}, the policy  
\[
\pi_{\nu^*_{\btheta^*,\blambda^*,\bxi^*}}(a|x) = \frac{\nu^*_{\btheta^*,\blambda^*,\bxi^*}(x,a)}{\nu_{\btheta^*,\blambda^*,\bxi^*}^{*,\sX}(x)}
\]  
solves the maximum causal entropy IRL problem.  

The gradient descent algorithm outputs the following Boltzmann distribution at the minimizer \( (\btheta^*,\blambda^*,\bxi^*) \):  
\begin{multline*}
\nu^*_{\btheta^*,\blambda^*,\bxi^*} = \\   \begin{bmatrix}
0.3308 & 0.0371 & 0.0425 & 0.0496 & 0.0591 & 0.0724 & 0.0871 & 0.0183 & 0.0024 & 0.0004 \\ 
0.0027 & 0.0003 & 0.0005 & 0.0007 & 0.0013 & 0.0031 & 0.0131 & 0.0813 & 0.0976 & 0.0998
\end{bmatrix}.
\end{multline*}
This distribution closely resembles \( \nu_E \), with only minor differences in columns 7 and 8. Consequently, the corresponding policy  
\begin{multline*}
\pi_{\nu^*_{\btheta^*,\blambda^*,\bxi^*}} = \\ \begin{bmatrix}
0.9919 & 0.9907 & 0.9890 & 0.9856 & 0.9785 & 0.9593 & 0.8696 & 0.1834 & 0.0239 & 0.0037 \\ 
0.0081 & 0.0093 & 0.0110 & 0.0144 & 0.0215 & 0.0407 & 0.1305 & 0.8166 & 0.9761 & 0.9963
\end{bmatrix}
\end{multline*}
is also close to the equilibrium policy \( \pi_E \), again differing slightly in columns 7 and 8.  

Recall that in the IRL setting, only the feature expectation vector is observable. Although \( \pi_E \) and \( \pi_{\nu^*_{\btheta^*,\blambda^*,\bxi^*}} \) may yield the same feature expectation vector under \( \mu_E \), their actual behavior can vary significantly. In this case, the near equivalence of these policies stems from the feature structure \( f(x,a,\mu) = (x, x \cdot \mu_{\av}, a) \) and the additional invariance constraint, as observed in the previous example.  
}

\section{Conclusion}\label{conc}

\tk{This paper addresses the maximum causal entropy IRL problem within the context of discrete-time MFGs under an infinite-horizon discounted-reward optimality criterion. We first formalize the maximum causal entropy IRL problem specifically for infinite-horizon MFGs, which initially presents as a non-convex optimization problem over policies. By leveraging the LP framework commonly used for MDPs, we transform this IRL problem into a convex optimization problem over state-action occupation measures. We then introduce a gradient descent algorithm, providing a guaranteed convergence rate, to compute the optimal solution.
Furthermore, we introduce a novel algorithm that recasts the MFG problem as a GNEP. This contribution is significant as it is particularly capable of computing the MFE for the forward RL problem. The practical effectiveness of our algorithm is demonstrated through numerical examples.}


\tk{This paper focuses on the theoretical analysis of the IRL problem under the simplifying assumption of linear rewards, which may not fully capture the complexity of real-world applications. The broader IRL field increasingly incorporates non-linear reward functions to better address practical scenarios. A natural extension would be to explore non-linear reward structures, potentially by leveraging the reproducing kernel Hilbert space (RKHS) framework for their representation.
}

\tk{Another further research direction involves refining the method for solving the GNEP formulation of MFGs. While we adapt an existing interior-point method, its convergence assumptions—though more relaxed than typical methods—remain somewhat restrictive. By exploiting the specific structure of the MFG-induced GNEP, future work could aim to develop less restrictive algorithms, potentially through refinements of existing methods. This presents a challenging but impactful path for advancing MFE computation.}



\acks{This work was supported by the Scientific and Technological Research Council of Turkey (TUBITAK), under Grant no: 1001-124F134.}

\appendix

\section{Maximum Causal Entropy IRL in MDPs}\label{me-irl}

In this section, we review the maximum entropy principle developed for IRL in MDPs. The literature on this topic is quite fragmented, and we believe that summarizing the existing approaches could be valuable for readers. A recent survey by \citet{GlTo22} on maximum entropy IRL for MDPs, which focuses primarily on finite-horizon problems, also provides a helpful overview. Here, we not only present the methods developed for MDPs, but we also arrange them in chronological order to illustrate the motivation behind the evolution of different types of maximum entropy principles. This structure highlights the rationale for the introduction of the maximum causal entropy principle, particularly for handling infinite-horizon problems.

A discrete-time stochastic MDP is specified by
$
\left( \sX,\sA,p,r \right),
$
where $\sX$ is a finite state space and $\sA$ is a finite action space. The components $p : \sX \times \sA  \to \sX$ and $r: \sX \times \sA \rightarrow [0,\infty)$ are the system dynamics and the one-stage reward function, respectively. Therefore, given the current state $x(t)$ and action $a(t)$, the reward $r(x(t),a(t))$ is received immediately, and the next state $x(t+1)$ evolves to a new state stochastically according to the following dynamics:
$
x(t+1) \sim p(\cdot|x(t),a(t)).
$
In this model, a randomized policy $\pi = \{\pi_t\}_{t=0}^{M}$, where $M$ is either finite or infinite, is a sequence of functions of the form $\pi_t: \sX \rightarrow \Pnew(\sA)$.

\subsection{Finite-Horizon MDPs}

In forward RL problems, the typical goal is to maximize a finite-horizon reward defined as
\begin{align}
	J(\pi,\mu_0) &= E^{\pi}\biggl[ \sum_{t=0}^{T-1} r(x(t),a(t)) \biggr], \nonumber 
\end{align}
where $T\in \mathbb{N}$ denotes the finite horizon (i.e., $M=T-1$) and $x(0) \sim \mu_0$. In this context, the one-stage reward function $r$ is known. In contrast to RL where the agent learns a policy to maximize a pre-defined reward function, IRL takes a different approach. Here, the agent is presented with a collection of trajectories generated by an expert. By analyzing these expert demonstrations, the IRL algorithm aims to deduce the reward function that the expert was implicitly trying to optimize. In essence, the expert's behavior serves as a substitute for the reward function, allowing the IRL agent to learn the underlying objective and potentially develop similar or even improved policies.

Since IRL aims to recover the reward function from expert demonstrations, the set of possible rewards needs some structure to make the problem tractable. A common assumption in IRL is that the reward can be expressed as a linear combination of feature vectors \tk{corresponding to} state and action pairs: $$
\cR \coloneqq \left\{r_{\theta}(x,a) = \langle \theta, f(x,a) \rangle: \theta \in \R^k, \,\, f:\sX\times\sA \rightarrow \R^k \right\}.
$$ 
In this setting, we suppose that an expert generates trajectories 
$
\D=\{\left(x_i(t),a_i(t)\right)_{t=0}^{T-1}\}_{i=1}^d 
$
under some optimal policy $\pi_{\opt}$. Therefore, if $d$ is large enough, by the law of large numbers, we have 
$$
\frac{1}{d} \sum_{i=1}^d \left( \sum_{t=0}^{T-1}  f(x_i(t),a_i(t)) \right) \simeq E^{\pi_{\opt}}\left[\sum_{t=0}^{T-1} f(x(t),a(t))\right] \eqqcolon \langle f \rangle_{\pi_{\opt}},
$$
where $E^{\pi_{\opt}}$ is the expectation under $\pi_{\opt}$. This suggests that the feature expectation vector, denoted by  $\langle f \rangle_{\pi_{\opt}}$ under the optimal policy $\pi_{\opt}$, is readily available.

The maximum entropy principle was introduced by \citet{ZiMa08} to address deterministic MDPs, where the state dynamics are generated by $x(t+1) = p(x(t),a(t))$ (i.e., no uncertainty in state dynamics).\footnote{\tk{Even though the system dynamics are deterministic, agents can still employ randomized policies.}} Defining the entropy of a probability distribution $P$ on a finite set $\sE$ as
$
H(P) \coloneqq -\sum_{e \in \sE} P(e) \, \log P(e),
$
the maximum entropy IRL problem in this setting can be formulated as follows:
\renewcommand\arraystretch{1.5}
\[
\begin{array}{lll}
	\mathbf{(OPT_d)} \,\, \text{maximize}_{P} \text{ } &H(P) 
	\\
	\phantom{\mathbf{(OPT_d)} \,\, } \text{subject to} & P(\tau) \geq 0 \,\, \forall \tau \in \sZ_{\spath}, \,\,\,   \sum_{\tau\in \sZ_{\spath}} P(\tau) = 1  \\
	\phantom{x} & \sum_{\tau\in \sZ_{\spath}} F(\tau) \, P(\tau)  = \langle f \rangle_{\pi_{\opt}},
\end{array}
\]
where
$
F(\tau) \coloneqq \sum_{(x,a) \in \tau} f(x,a)
$
and $\sZ_{\spath}$ is the path space defined as
$$
\sZ_{\spath} \coloneqq \left\{ \tau \in (\sX \times \sA)^T: x(t+1) = p(x(t),a(t)), \,\, t=0,\cdots,T-2 \right\}.
$$
Here, the expert behaves according to some optimal policy $\pi_{\opt}$ under some unknown reward function 
$r_{\opt}(x,a) = \langle \theta_{\opt},f(x,a) \rangle.$
However, a key challenge in standard IRL problems arises because there can be multiple values for $\theta$ (besides the optimal $\theta_{\opt}$) that can explain the observed expert trajectories. The principle of maximum entropy suggests that among all these candidates, we should favor the one with the highest entropy.
Hence, a solution of $\mathbf{(OPT_d)}$ leads to an optimal policy with minimum bias.

\tk{In stochastic MDPs, since there is independent randomness due to noise in the state dynamics, it is not possible to formulate the maximum entropy principle over the path space as before.} To see why, suppose that the problem can be formulated over the probability distributions on the path space similar to $\mathbf{(OPT_d)}$. Then, the optimal solution $P^*$ must adhere to the state dynamics, meaning that $P^*$ must be factorized in the following form:~\footnote{Here, the first part is static, and only the second part in the product can be manipulated.}
$$
P^*(\tau) = p_0(x(0)) \, \prod_{t=0}^{T-2} p(x(t+1)|x(t),a(t)) \, \prod_{t=0}^{T-1} \pi(a(t)|x(t)).
$$

\tk{However,} the optimal solution to the maximum entropy problem over the path space might not exhibit a factored form as above. This challenge can be addressed by replacing the maximum entropy principle with the maximum causal entropy principle \citep[see][]{ZiBaDe10,ZiBaDe13}. In this approach, we consider the causally conditioned probability distribution of actions given states $P(\ba||\bx) \coloneqq \prod_{t=0}^{T-1} \pi_t(a(t)|x(t))$ and maximize its entropy \[H(P(\cdot||\cdot)) \coloneqq E^{\pi} [ -\log P(\ba||\bx) ] = \sum_{t=0}^{T-1} E^{\pi} [ -\log \pi_t(a(t)|x(t))]\]  instead. Let $\C$ denote the set of causally conditioned probability distributions. Then, the maximum causal entropy IRL problem is defined as
\renewcommand\arraystretch{1.5}
\[
\begin{array}{lll}
	\mathbf{(OPT_s)} \,\, \text{maximize}_{P(\cdot||\cdot) \in \C} \text{ } & H(P(\cdot||\cdot)) 
	\\
	\phantom{\mathbf{(OPT_s)}} \,\, \text{subject to} \text{ } & \sum_{\tau \in (\sX\times\sA)^T} F(\tau) \, {\cal T}_{P(\cdot||\cdot)}(\tau)  = \langle f \rangle_{\pi_{\opt}},
\end{array}
\]
where ${\cal T}_{P(\cdot||\cdot)}(\tau) \coloneqq p_0(x(0)) \, \prod_{t=0}^{T-2} p(x(t+1)|x(t),a(t)) \, P(\ba||\bx)$. One can prove that $\mathbf{(OPT_s)}$ is convex in $P(\cdot||\cdot)$ using the fact that each constraint is linear and the objective function is concave in $P(\cdot||\cdot)$. 

To solve the maximum entropy problem $\mathbf{(OPT_s)}$, we can introduce a Lagrange multiplier $\theta \in \R^k$ to penalize the deviations from feature expectation matching constraint. The solution of the Lagrangian relaxation of $\mathbf{(OPT_s)}$ for a given $\theta$ can then be expressed by the following soft Bellman optimality equations 
$$
	Q_t^{\theta}(x,a) = r_{\theta}(x,a) + \sum_{y \in \sX} V_{t+1}^{\theta}(y) \, p(y|x,a), \,\,\,
	V_t^{\theta}(x) = \log \sum_{a \in \sA} e^{Q_t^{\theta}(x,a)} \eqqcolon \softmax_{a \in \sA} Q_t^{\theta}(x,a),
$$ 
and it can be written in the form of
$
P_{\theta}(\ba||\bx) = \prod_{t=0}^{T-1} \pi_t^{\theta}(a(t)|x(t)),
$
where $\pi^{\theta}_t(a|x) = e^{Q_t^{\theta}(x,a)-V_t^{\theta}(x)}$ \citep[see][]{ZiBaDe10,ZiBaDe13}. This implies that the optimal solution of $\mathbf{(OPT_s)}$ leads to the probability distribution on the path space of the following form 
$$
P(\tau)\propto p_0(x(0)) \, \prod_{t=0}^{T-2} p(x(t+1)|x(t),a(t)) \, e^{ Q_{t}^{\theta}(x(t),a(t))}.
$$
There are instances in the literature, for example in \citet[eq. 1]{AaSuNa20}, \citet[eq. 3]{ChZhLiWi23} or \citet[eq. 1]{FuKaSe18},  where it is asserted that the solution to the maximum causal entropy problem for stochastic MDPs is of the form 
	$$
	P(\tau) \propto p_0(x(0)) \, \prod_{t=0}^{T-2} p(x(t+1)|x(t),a(t)) \, e^{r_{\theta}(x(t),a(t))}.
	$$
However, based on the preceding calculations, it becomes evident that $r_{\theta}$ needs to be replaced with the soft Q-functions $Q_{t}^{\theta}$. To find an optimal $\theta^*$, we can either use the feature expectation matching constraint in $\mathbf{(OPT_s)}$ or we can introduce an alternative optimization problem, whose solution satisfies the feature expectation matching constraint:
\begin{align*} 
\mathbf{(\widehat{OPT_s})}  \,\, \max_{\theta \in \R^k} \sum_{\tau \in (\sX\times\sA)^T} \log P_{\theta}(\ba||\bx) \, {\cal T}_{P_{\opt}(\cdot||\cdot)}(\tau).
\end{align*}
This problem in the literature is referred to as the maximum log-likelihood estimation problem. It serves as the commonly used formulation of the maximum entropy problem within the domain of IRL when dealing with stochastic MDPs.

\subsection{Infinite-Horizon MDPs}

In the infinite-horizon scenario (i.e., $M=\infty$), we consider the discounted reward
\begin{align}
	J(\pi,\mu_0) &= E^{\pi}\biggl[ \sum_{t=0}^{\infty} \beta^t \, r(x(t),a(t)) \biggr], \nonumber 
\end{align}
where $\beta \in (0,1)$ is the discount factor and $x(0) \sim \mu_0 $. Within the field of MDPs, it is well-established that stationary Markovian policies suffice for achieving optimality with discounted rewards. Consequently, we restrict our focus to such policies, where the policy remains constant for all time steps; that is, $\pi_s=\pi_t=\pi$ for all $s,t \geq 0$.

Extending the maximum causal entropy principle to an infinite horizon poses a challenge. The difficulty arises because defining the causally conditioned probability distribution becomes ill-defined in this scenario as it involves multiplying an infinite number of terms, each being less than one. Therefore, we use the policies instead of the causally conditioned probability distribution of actions given states to formulate the problem in the infinite-horizon case \citep[see][]{ZhBlBa18}. Indeed, this is the appropriate variant of the maximum entropy principle that we use for the IRL problem in the infinite-horizon setting for MFGs. Defining the discounted causal entropy of the policy $\pi$ as
$$
H(\pi) \coloneqq \sum_{t=0}^{\infty} \beta^t E^{\pi} \left[-\log \, \pi(a(t)|x(t)) \right],
$$
the maximum discounted causal entropy IRL problem can be formulated by
\renewcommand\arraystretch{1.5}
\[
\begin{array}{lll}
	\mathbf{(OPT_{\infty})} \,\, \text{maximize}_{\pi} \text{ } &H(\pi) 
	\\
	\phantom{\mathbf{(OPT_1)}} \,\,\,\, \text{subject to} & \pi(a|x) \geq 0 \,\, \forall (x,a) \in \sX \times \sA   \\
	\phantom{x} & \sum_{a \in \sA} \pi(a|x) = 1 \,\, \forall x \in \sX \\
	\phantom{x} & \sum_{t=0}^{\infty} \beta^t \, E^{\pi}[f(x(t),a(t))] = \langle f \rangle_{\pi_{\opt}},
\end{array}
\]
where $\langle f \rangle_{\pi_{\opt}}  \coloneqq \sum_{t=0}^{\infty} \beta^t \, E^{\pi_{\opt}}[f(x(t),a(t))]$. The challenge here is the problem's lack of convexity. This arises from the last constraint, which is non-convex with respect to policies $\pi$. \citet{ZhBlBa18} convert this non-convex problem with respect to the policies into a convex one by expressing the optimization problem using state-action occupation measures. This is indeed the approach we adapt to deal with maximum causal entropy IRL problem for MFGs.

To solve $\mathbf{(OPT_{\infty})}$, a different approach can also be taken by considering the Lagrangian relaxation of the problem. This particular formulation was not explored by \citet{ZhBlBa18}. Let us introduce Lagrange multiplier $\theta \in \R^k$ to penalize the deviations from feature expectation matching constraint. Then, the solution of the Lagrangian relaxation of $\mathbf{(OPT_{\infty})}$ for a given $\theta$ can be given by the following soft Bellman optimality equations
$$
Q^{\theta}(x,a) = r_{\theta}(x,a) + \sum_{y \in \sX} V^{\theta}(y) \, p(y|x,a), \,\,\,
V^{\theta}(x) = \log \sum_{a \in \sA} e^{Q^{\theta}(x,a)},
$$
as the Lagrangian relaxation is indeed equivalent to an entropy regularized MDP with the reward function $r_{\theta}$ \citep[see][]{NeJoGo17}, and it can be shown that the solution should take the form of
$$
\pi^{\theta}(a|x) = e^{Q^{\theta}(x,a)-V^{\theta}(x)}.
$$
To find an optimal $\theta^*$, we can then either use the feature expectation matching constraint in $\mathbf{(OPT_{\infty})}$ or introduce an alternative optimization problem, whose solution satisfies the feature expectation matching constraint:
$$
\mathbf{(\widehat{OPT_{\infty}})}  \,\, \max_{\theta \in \R^k} \sum_{(x,a) \in (\sX\times\sA)} \log \pi^{\theta}(a|x) \, \gamma_{\pi_{\opt}}(x,a),
$$
where $\gamma_{\pi_{\opt}}$ is the state-action occupation measure under the expert's optimal policy $\pi_{\opt}$. Similar to the finite-horizon scenario, this problem can be conceptualized as an instance of the maximum log-likelihood estimation problem. 

\vskip 0.2in

\end{document}